\mathchardef\mhyphen="2D
\renewcommand\leq\leqslant
\renewcommand\geq\geqslant
\mathchardef\mhyphen="2D
\newcommand{\MinCov}[1]{\ensuremath{\mathsf{Min~#1 \mhyphen Club~Cover}}}
\newcommand{\Cover}[2]{\ensuremath{\mathsf{#1 \mhyphen Club~Cover(#2)}}}
\newcommand{\Cov}[1]{\ensuremath{\mathsf{#1 \mhyphen Club~Cover}}}
\newcommand{\DoubleSat}{\ensuremath{\mathsf{5 \mhyphen Double \mhyphen Sat}}}
\newcommand{\Sat}{\ensuremath{\mathsf{Sat}}}
\newcommand{\TSat}{\ensuremath{\mathsf{3 \mhyphen Sat}}}
\newcommand{\MinPart}{\ensuremath{\mathsf{Minimum~Clique~Partition}}}
\newcommand{\PartT}{\ensuremath{\mathsf{3 \mhyphen Clique~Partition}}}
\begin{document}

\title{Covering with Clubs: Complexity and Approximability}
\author{Riccardo Dondi\inst{1}
\and Giancarlo Mauri \inst{2}
\and Florian Sikora \inst{3}
\and Italo Zoppis \inst{2}}
\institute{
Universit\`a degli Studi di Bergamo, Bergamo, Italy \email{riccardo.dondi@unibg.it}\\
\and
Universit\`a degli Studi di Milano-Bicocca, Milano - Italy \email{ \{mauri,zoppis\}@disco.unimib.it}\\
\and
Universit\'{e} Paris-Dauphine, PSL Research University, CNRS UMR 7243, LAMSADE, 75016 Paris, France \email{florian.sikora@dauphine.fr}
}
\maketitle

\begin{abstract}
Finding cohesive subgraphs in a network is a well-known problem in graph theory. 
Several alternative formulations of cohesive subgraph have been proposed,
a notable example being 
$s$-club, 
which is a subgraph where each vertex is at distance at most $s$ 
to the others. 
Here we consider 
the problem of covering a given graph with the minimum number of $s$-clubs.
We study the computational and approximation complexity of this problem, 
when $s$ is equal to 2 or 3. 
First, we show that deciding if there exists a cover 
of a graph with three $2$-clubs is NP-complete, and that 
deciding if there exists a cover 
of a graph with two $3$-clubs is NP-complete.
Then, we consider the approximation complexity 
of covering a graph with the minimum number of $2$-clubs 
and $3$-clubs. 
We show that, given a graph $G=(V,E)$ to be covered, 
covering $G$ with the minimum number of $2$-clubs is
not approximable  within factor $O(|V|^{1/2 -\varepsilon})$, 
for any $\varepsilon>0$, and 
covering $G$ with the minimum number of $3$-clubs is
not approximable  within factor $O(|V|^{1 -\varepsilon})$, for any $\varepsilon>0$.
On the positive side, we give an approximation
algorithm of factor $2|V|^{1/2}\log^{3/2} |V|$ for covering 
a graph with the minimum number of $2$-clubs.
\end{abstract}

\section{Introduction}
\label{sec:introduction}

The quest for modules inside a network is a well-known and deeply studied problem in network analysis, 
with several application in different fields, like computational biology or social network analysis.
A highly investigated problem is that of finding cohesive subgroups inside a network 
which in graph theory translates 
in highly connected subgraphs. 
A common approach is to look for cliques (i.e. complete graphs),
and several combinatorial problems have been considered,
notable examples being the 
{\sf Maximum Clique problem} (\cite[GT19]{garey}), 
the {\sf Minimum Clique Cover} problem (\cite[GT17]{garey}),
and the {\sf Minimum Clique Partition} problem (\cite[GT15]{garey}).
This last is a  classical problem in theoretical computer science, whose goal is to partition the vertices of a graph into the minimum number of cliques.
The {\sf Minimum Clique Partition} problem has been deeply studied since the seminal paper of Karp~\cite{DBLP:conf/coco/Karp72}, studying its complexity in several graph classes ~\cite{DBLP:journals/dam/CerioliFFMPR08,DBLP:journals/ita/CerioliFFP11,DBLP:journals/algorithmica/PirwaniS12,DBLP:journals/gc/DumitrescuP11}.
%

In some cases, asking for a complete subgraph is 
too restrictive, as interesting highly connected graphs
may have some missing edges due to noise in the data considered
or because some pair may not be directly connected by an edge in the subgraph
of interest. 
To overcome this limitation of the clique approach, 
alternative definitions of highly connected graphs have been proposed,
leading to the concept of 
\emph{relaxed clique}~\cite{DBLP:journals/algorithms/Komusiewicz16}.
A relaxed clique is a graph $G=(V,E)$ whose vertices satisfy a property which
is a relaxation of the clique property. 
Indeed, a clique is a subgraph 
whose vertices are all at distance one 
from each other and have the same degree (the size of the clique minus one). 
Different definitions of relaxed clique are obtained by modifying one of the properties of
clique, thus leading to distance-based relaxed cliques, degree-based relaxed cliques, and so on 
(see for example~\cite{DBLP:journals/algorithms/Komusiewicz16}).



In this paper, we focus on a distance-based relaxation. 
In a clique all the vertices are required to be at distance at most one
from each other. 
Here this constraint is relaxed, so that the vertices have to be at distance at most $s$, 
for an integer $s \geq 1$. A subgraph whose vertices are all distance
at most $s$ is called an \emph{$s$-club} (notice that, when $s=1$, an $s$-club is exactly a clique).
The identification of $s$-clubs inside a network has been applied to social networks~\cite{Mokken79,SociometricClique,DBLP:journals/snam/LaanMM16,DBLP:journals/snam/MokkenHL16,DBLP:conf/biostec/ZoppisDSCSM18}, 
and biological 
networks~\cite{DBLP:journals/jco/BalasundaramBT05}.
Interesting recent studies have shown the relevance of finding $s$-clubs 
in a network~\cite{DBLP:journals/snam/LaanMM16,DBLP:journals/snam/MokkenHL16}, 
in particular focusing on finding $2$-clubs in real networks like DBLP or 
a European corporate network.

Contributions to the study of $s$-clubs mainly focus on the 
{\sf Maximum s-Club} problem, 
that is the problem of finding an $s$-club of maximum size.
{\sf Maximum s-Club} is known to be NP-hard, 
for each $s\geq 1$~\cite{DBLP:journals/eor/BourjollyLP02}.
Even deciding whether there exists an $s$-club larger than a given size 
in a graph of diameter $s+1$ is NP-complete, 
for each $s\geq 1$~\cite{DBLP:journals/jco/BalasundaramBT05}.
The {\sf Maximum s-Club} problem has been studied also in the approximability and parameterized complexity framework.
A polynomial-time approximation algorithm with factor $|V|^{1/2}$ for every $s \geq 2$ on an input graph $G=(V,E)$ has been designed~\cite{Asahiro2017}. 
This is optimal, since the problem is not approximable within factor $|V|^{1/2 - \varepsilon}$, 
on an input graph $G=(V,E)$, for each $\varepsilon >0 $ and $s \geq 2$~\cite{Asahiro2017}.
As for the parameterized complexity framework, the problem is known to be fixed-parameter tractable, when parameterized by the size 
of an $s$-club ~\cite{DBLP:journals/ol/SchaferKMN12,DBLP:journals/dam/KomusiewiczS15,DBLP:journals/computing/ChangHLS13}.
The {\sf Maximum s-Club} problem has been investigated also 
for structural parameters and specific graph classes~\cite{DBLP:journals/jgaa/HartungKN15,DBLP:journals/dam/GolovachHKR14}.

In this paper, we consider a different combinatorial problem, 
where we aim at covering the vertices of a network
with a set of subgraphs. 
Similar to {\sf Minimum Clique Partition}, we consider the problem of covering a graph with the minimum number of $s$-clubs such that each vertex belongs to an $s$-club.
We denote this problem by $\MinCov{s}$,
and we focus in particular on the cases $s=2$ and $s=3$. 
We show some analogies and differences between 
$\MinCov{s}$ 
and {\sf Minimum Clique Partition}.
We start in Section~\ref{sec:complexity} by considering 
the computational complexity of the problem of covering 
a graph with two or three $s$-clubs. 
This is motivated by the fact that 
{\sf Clique Partition} is known to be in P when
we ask whether there exists a partition of the 
graph consisting of two cliques, 
while it is NP-hard to decide whether there exists a partition of the 
graph consisting of three cliques~\cite{DBLP:journals/tcs/GareyJS76}.
As for {\sf Clique Partition}, we show that 
it is NP-complete to decide whether there exist three $2$-clubs that cover a graph. 
On the other hand, we show that, unlike {\sf Clique Partition}, 
it is NP-complete to decide whether there exist two $3$-clubs that cover a graph.
These two results imply also that $\MinCov{2}$ and $\MinCov{3}$
do not belong to the class XP for the parameter "number of clubs" in a cover.


Then, we consider the approximation complexity of $\MinCov{2}$ and $\MinCov{3}$.  
We recall that, given an input graph $G=(V,E)$, 
{\sf Minimum Clique Partition} is not approximable within factor $O(|V|^{1-\varepsilon})$, for any $\varepsilon > 0$, unless $P=NP$~\cite{DBLP:journals/toc/Zuckerman07}.
Here we show that 
$\MinCov{2}$ 
has a slightly different behavior, while $\MinCov{3}$ is similar to \ensuremath{\mathsf{Clique~Partition}}.
Indeed, in Section~\ref{sec:HardApprox} we prove that $\MinCov{2}$ is not approximable within factor $O(|V|^{1/2 -\varepsilon})$,
for any $\varepsilon>0$,
unless $P=NP$ ,
while $\MinCov{3}$ is not approximable within factor $O(|V|^{1 -\varepsilon})$,
for any $\varepsilon>0$,  unless
$P=NP$. 
In Section~\ref{sec:ApproxAlgo}, we present a greedy approximation algorithm that has factor 
$2|V|^{1/2}\log^{3/2} |V|$ for $\MinCov{2}$, which almost match the inapproximability result for the problem.
We start the paper by giving in Section~\ref{sec:preliminaries} some definitions and 
by formally defining the problem we are interested in.

\section{Preliminaries}
\label{sec:preliminaries}

Given a graph $G=(V,E)$ and a subset $V' \subseteq V$, we denote by
$G[V']$ the subgraph of $G$ induced by $V'$.
Given two vertices $u, v \in V$, the distance between $u$ and $v$ in $G$, 
denoted by $d_G(u,v)$, is the length of a shortest path from $u$ to $v$.
The diameter of a graph $G=(V,E)$ is the maximum distance between two vertices of $V$.
Given a graph $G=(V,E)$ and a vertex $v \in V$, we denote by $N_G(v)$ the set of neighbors of $v$,
that is $N_G(v)= \{u: \{v,u\} \in E \}$. 
We denote by $N_G[v]$ the close neighborhood of $V$, that is 
$N_G[v]= N_G(v) \cup \{v\}$.
Define $N_G^l(v)= \{u: \text{ $u$ has distance at most $l$ from $v$} \}$,
with $1 \leq l \leq 2$. 
Given a set of vertices $X \subseteq V$ and $l$, with 
$1 \leq l \leq 2$,
define $N^l_G(X)= \bigcup_{u \in X} N_G^l(u)$.
We may omit the subscript $G$ when it is clear from the context.
Now, we give the definition of $s$-club, which is fundamental for the paper.

\begin{definition}
\label{def:2-club}
Given a graph $G=(V,E)$, and a subset $V' \subseteq V$, 
$G[V']$ is an $s$-club if it has diameter at most $s$.
\end{definition}

Notice that an $s$-club must be a connected graph.
We present now the formal definition of the 
\ensuremath{\mathsf{Minimum~s\mhyphen Club~Cover}}
problem 
we are interested in.

\smallskip

\noindent\ensuremath{\mathsf{Minimum~s\mhyphen Club~Cover}}  ($\MinCov{s}$)\\
\textbf{Input:} a graph $G=(V,E)$ and an integer $s \geq 2$.\\
\textbf{Output:} a minimum cardinality collection $\mathcal{S}= \{ V_1, \dots, V_h \}$ 
such that, for each $i$ with $1 \leq i \leq h$, $V_i \subseteq V$,
$G[V_i]$ is an $s$-club, 
and, for each vertex $v \in V$, there exists a set $V_j$, 
with $1 \leq j \leq h$, such that $v \in V_j$.

We denote by $\Cover{s}{h}$, with $1 \leq h \leq |V|$, 
the decision version of $\MinCov{s}$ that asks
whether there exists a cover of $G$ consisting of at most $h$ $s$-clubs.

Notice that while in {\sf Minimum Clique Partition}
we can assume that
the cliques that cover a graph $G=(V,E)$
partition $V$, hence the cliques are vertex disjoint, 
we cannot make this assumption for $\MinCov{s}$. 
Indeed, in a solution of $\MinCov{s}$, 
a vertex may be covered by more than one $s$-club, in
order to have a cover consisting of the minimum number of $s$-clubs.
Consider the example of Fig.~\ref{fig:ExCover}. The
two $2$-clubs induced by $\{v_1,v_2,v_3,v_4,v_5 \}$
and $\{v_1,v_6,v_7,v_8,v_9 \}$ cover $G$, and both these $2$-clubs contain vertex $v_1$.
However, if we ask for a partition of $G$, we need at least three $2$-clubs.
This difference between {\sf Minimum~Clique~Partition} and $\MinCov{s}$ 
is due to the fact that, while being a clique is a hereditary property, 
this is not the case for being an $s$-club. 
If a graph $G$ is an $s$-club, then a subgraph of $G$ may not be an $s$-club 
(for example a star is a $2$-club, but the subgraph obtained by removing its center
is not anymore a $2$-club).

\begin{figure}
\centering
\begin{center}\def\svgwidth{0.31\textwidth}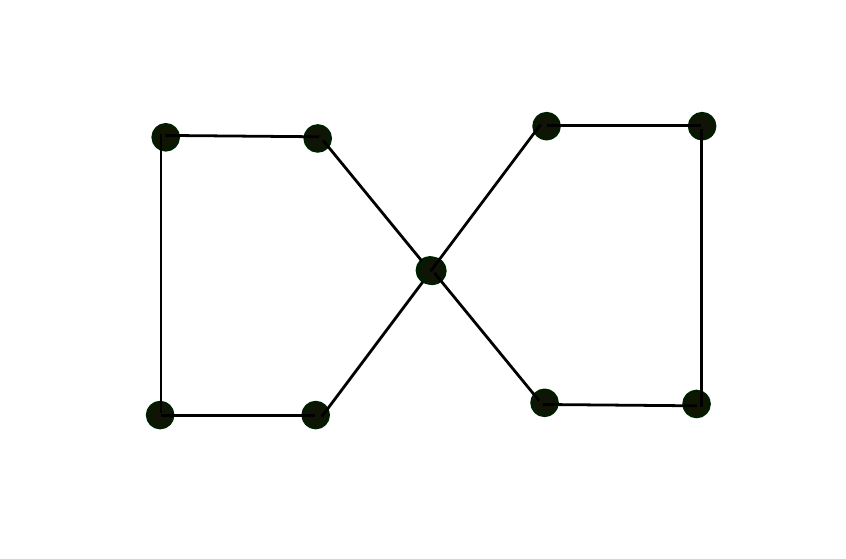\end{center}
\caption{A graph $G$ and a cover consisting of two $2$-clubs (induced by the 
vertices in the ovals). Notice that the $2$-clubs of this cover must both
contain vertex $v_1$. 
}
\label{fig:ExCover}
\end{figure}

\section{Computational Complexity}
\label{sec:complexity}

In this section we investigate the computational complexity of $\Cov{2}$ and $\Cov{3}$ 
and we show that $\Cover{2}{3}$, that is deciding whether
there exists a cover of a graph $G$ with three $2$-clubs,
and $\Cover{3}{2}$, 
that is deciding whether
there exists a cover of a graph $G$ with two $3$-clubs,
are 
NP-complete. 

\subsection{$\Cover{2}{3}$ is NP-complete}
\label{sec:compl:2-3}

In this section we show that $\Cover{2}{3}$ is NP-complete by giving a reduction from the $\PartT$ problem,
that is the problem of computing whether there exists a partition of a graph $G^p=(V^p,E^p)$ in three cliques.
Consider an instance $G^p=(V^p,E^p)$ of $\PartT$, we construct an instance 
$G=(V,E)$ of $\Cover{2}{3}$ (see Fig.~\ref{fig:3-club(2)}). 
The vertex set $V$ is defined as follows:
\[
V = \{ w_i: v_i \in V^p  \} \cup \{ w_{i,j}: \{ v_i,v_j \} \in E^p \wedge i < j \} \}
\]
The set $E$ of edges is defined as follows:
\begin{equation*}
\begin{split}
E = \{ \{w_i, w_{i,j} \}, \{w_i, w_{h,i} \}:  v_i \in V^p, w_i, w_{i,j}, w_{h,i} \in V \} 
\cup \\
\{ \{w_{i,j},w_{i,l}\}, \{w_{i,j},w_{h,i}\}, 
\{w_{h,i},
w_{z,i}\}:
w_{i,j}, w_{i,l}, w_{h,i}, w_{z,i} \in V \}
\end{split}
\end{equation*}
Before giving the main results of this section, we prove a property of $G$.
\begin{lemma}
\label{lem:cover2-3-Prop1}
Let $G^p=(V^p,E^p)$ be an instance of $\PartT$ and let $G=(V,E)$ be the corresponding
instance of $\Cover{2}{3}$. Then, given two vertices $v_i, v_j \in V^p$ and the corresponding 
vertices $w_i, w_j \in V$:
\begin{itemize}
\item if $\{ v_i,v_j \} \in E^p$, then $d_G(w_i,w_j) = 2$
\item if $\{ v_i,v_j \} \notin E^p$, then $d_G(w_i,w_j) \geq 3$
\end{itemize} 
\end{lemma}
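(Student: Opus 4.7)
The plan is to handle the two cases separately by a careful inspection of the neighborhood structure of each $w_i$ in $G$. The key observation, which I would establish first, is that by construction the neighbors of $w_i$ in $G$ are exactly the ``edge-vertices'' $w_{i,k}$ (with $i<k$) and $w_{k,i}$ (with $k<i$) such that $\{v_i,v_k\}\in E^p$. In particular, no two ``node-vertices'' $w_i,w_j$ are adjacent in $G$, so $d_G(w_i,w_j)\geq 2$ always.

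For the first bullet, assume $\{v_i,v_j\}\in E^p$ and, without loss of generality, $i<j$. Then the vertex $w_{i,j}$ exists in $V$, and by the definition of $E$ it is adjacent to both $w_i$ and $w_j$. This yields a path of length $2$, so combined with $d_G(w_i,w_j)\geq 2$ we obtain $d_G(w_i,w_j)=2$.

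For the second bullet, assume $\{v_i,v_j\}\notin E^p$. I would argue by contradiction that $w_i$ and $w_j$ share no common neighbor. Any neighbor of $w_i$ has the form $w_{a,b}$ with $i\in\{a,b\}$; similarly any neighbor of $w_j$ has the form $w_{a,b}$ with $j\in\{a,b\}$. A common neighbor would therefore be some $w_{a,b}$ with $\{a,b\}\supseteq\{i,j\}$, forcing $\{a,b\}=\{i,j\}$, but this edge-vertex only exists when $\{v_i,v_j\}\in E^p$, a contradiction. Hence $d_G(w_i,w_j)\geq 3$.

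The only potential pitfall is bookkeeping around the convention $i<j$ in the definition of $w_{i,j}$; I would simply note once that $w_{i,j}$ and $w_{j,i}$ refer to the same vertex and proceed with whichever orientation is convenient. No step here is genuinely hard, the whole argument is a direct unfolding of the construction.
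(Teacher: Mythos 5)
Your proposal is correct and follows essentially the same route as the paper: both arguments rest on the observation that $N_G(w_i)$ consists exactly of the edge-vertices indexed by $i$, so that $w_i$ and $w_j$ have a common neighbor if and only if $w_{i,j}$ (equivalently $w_{j,i}$) exists, which happens if and only if $\{v_i,v_j\}\in E^p$. Your version is slightly more explicit in noting that node-vertices are never adjacent (giving $d_G(w_i,w_j)\geq 2$ unconditionally), but the substance is identical.
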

\begin{proof}
Notice that $N_G(w_i)= \{w_{i,z}: \{ v_i,v_z \} \in E^p \wedge i<z \} \cup 
\{w_{h,i}: \{ v_i,v_h \} \in E^p \wedge h<i \}$.
It follows that $w_j \in N^2_G(w_i)$ if and only if there exists a vertex 
$w_{i,j}$ (or $w_{j,i}$), which is adjacent
to both $w_i$ and $w_j$. But then, by construction, 
$w_j \in N^2_G(w_i)$ if and only if $\{ v_i,v_j \} \in E^p$.
\qed\end{proof}

We are now able to prove the main properties of the reduction.

\begin{lemma}
\label{lem:cover2-3-V1}
Let $G^p=(V^p,E^p)$ be a graph input of $\PartT$ and let $G=(V,E)$ be the corresponding
instance of $\Cover{2}{3}$. Then, given a solution of $\PartT$ on $G^p=(V^p,E^p)$, we can 
compute in polynomial time a solution of $\Cover{2}{3}$ on $G=(V,E)$.
\end{lemma}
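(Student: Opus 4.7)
The plan is to turn the $3$-clique partition of $G^p$ directly into three $2$-clubs of $G$, one per clique. Let $C_1, C_2, C_3 \subseteq V^p$ be the partition. I would define, for $k=1,2,3$,
\[
V_k = \{ w_i : v_i \in C_k \} \;\cup\; \{ w_{i,j} : v_i \in C_k \text{ and } i < j \}.
\]
The first thing I would verify is that $\{V_1,V_2,V_3\}$ covers $V$: every $w_i$ lies in exactly one $V_k$ since the $C_k$ partition $V^p$, and every edge-vertex $w_{i,j}$ with $i<j$ is placed in the $V_k$ that contains $v_i$.

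The bulk of the proof is then showing that each $G[V_k]$ has diameter at most $2$. The key structural facts I would isolate up front are: (i) two edge-vertices $w_{a,b}$ and $w_{c,d}$ of $G$ are adjacent precisely when $\{a,b\} \cap \{c,d\} \neq \emptyset$, and (ii) $w_i$ is adjacent exactly to the edge-vertices whose index set contains $i$. Using these I would handle pairs in $V_k$ by cases:
\begin{itemize}
\item For $w_i, w_j$ with $v_i, v_j \in C_k$: since $C_k$ is a clique of $G^p$, $\{v_i,v_j\} \in E^p$, so the edge-vertex $w_{\min(i,j),\max(i,j)}$ exists, lies in $V_k$ (its smaller index corresponds to a vertex in $C_k$), and is adjacent to both $w_i$ and $w_j$.
\item For $w_i$ and $w_{a,b}$ with $v_i, v_a \in C_k$: either $i \in \{a,b\}$, giving adjacency, or the vertex $w_{\min(i,a),\max(i,a)}$ belongs to $V_k$, is adjacent to $w_i$, and shares the index $a$ with $w_{a,b}$.
\item For $w_{a,b}$ and $w_{c,d}$ with $v_a, v_c \in C_k$: either they share an index and are adjacent, or $w_{\min(a,c),\max(a,c)} \in V_k$ connects them in two steps.
\end{itemize}

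The main obstacle—and the point that dictates the assignment rule—is ensuring that the intermediate vertex used to connect two members of $V_k$ really lies in $V_k$, not just in $G$. The choice of placing $w_{i,j}$ into the club of the smaller-index endpoint $v_i$ makes this work: in every case above the required intermediate edge-vertex has its smaller-index endpoint among $\{v_i,v_j,v_a,v_c\} \cap C_k$, so it is in $V_k$ by construction. Once these three cases are checked, $G[V_k]$ is a $2$-club for each $k$, and $\{V_1,V_2,V_3\}$ is a valid cover of $G$ with three $2$-clubs, computable in polynomial time from the given partition.
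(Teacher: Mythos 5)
Your proof is correct and follows essentially the same route as the paper's: assign each edge-vertex $w_{i,j}$ (with $i<j$) to the club of its smaller-index endpoint, verify coverage, and check the three kinds of pairs in each $V_k$ using an intermediate edge-vertex whose existence is guaranteed by the clique property and whose membership in $V_k$ is guaranteed by the assignment rule. Your explicit use of $\min/\max$ in identifying the intermediate vertex is in fact slightly more careful than the paper's ``w.l.o.g.'' ordering, but the argument is the same.
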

\begin{proof}
Consider a solution of $\PartT$ on $G^p=(V^p,E^p)$, and let $V^p_{1}$, $V^p_{2}$, $V^p_{3} \subseteq V^p$ 
be the sets of vertices of $G^p$ that partition $V^p$.
We define a solution 
of $\Cover{2}{3}$ on $G=(V,E)$ as follows. For each $d$, with $1 \leq d \leq 3$,
define 
\[
V_d = \{ w_j \in V: v_j \in V^p_{d} \} \cup \{ w_{i,j}:  v_i \in V^p_{d} \}
\]

We show that each $G[V_d]$, with $1 \leq d \leq 3$, is a $2$-club. 
Consider two vertices $w_i, w_j \in V_d$, with $1 \leq i < j \leq |V|$. 
Since they correspond to two vertices 
$v_i, v_j \in V^p$ that belong to a clique of $G^p$,
it follows that $\{ v_i,v_j \} \in E^p$ and $w_{i,j} \in V_d$. 
Thus $d_{G[V_d]}(w_i,w_j) = 2$.
Now, consider the vertices $w_i \in V_d$,  with $1 \leq i \leq |V|$, 
and $w_{h,z} \in V_d$, with $1 \leq h < z \leq |V|$. 
If $i = h$ or $i=z$, assume w.l.o.g. $i=h$,
then  by construction $d_{G[V_d]}(w_i,w_{i,z}) = 1$.
Assume that $i \neq h$ and $i \neq z$ (assume w.l.o.g. that $i <h <z$), since $w_{h,z} \in V_d$,
it follows that $w_h \in V_d$. Since $w_i, w_h \in V_d$, it follows that 
$w_{i,h} \in V_d$. By construction, there exist edges 
$\{ w_{i,h}, w_{h,z} \}$, $\{ w_i, w_{i,h} \}$ in $E^p$, thus implying that
$d_{G[V_d]}(w_i,w_{h,z}) = 2$.
Finally, consider two vertices $w_{i,j}, w_{h,z} \in V_d$,  with $1 \leq i<j \leq |V|$
and  $1 \leq h < z \leq |V|$.
Then, by construction, $w_i \in V_d$ and 
$w_h \in V_d$. But then, 
$w_{i,h}$ belongs to $V_d$, and, by construction, $\{w_{i,j},w_{i,h} \} \in E$
and $\{w_{h,z},w_{i,h} \} \in E$. It follows that 
$d_{G[V_d]}(w_{i,j},w_{h,z}) = 2$.

We conclude the proof observing that, by construction, since $V^p_1, V^p_2, V^p_3$ partition $V^p$,
it holds that $V = V_1 \cup V_2 \cup V_3$, thus $G[V_1]$, $G[V_2]$, $G[V_3]$ covers $G$.
\qed\end{proof}
\begin{figure}
\centering
\begin{center}\def\svgwidth{0.55\textwidth}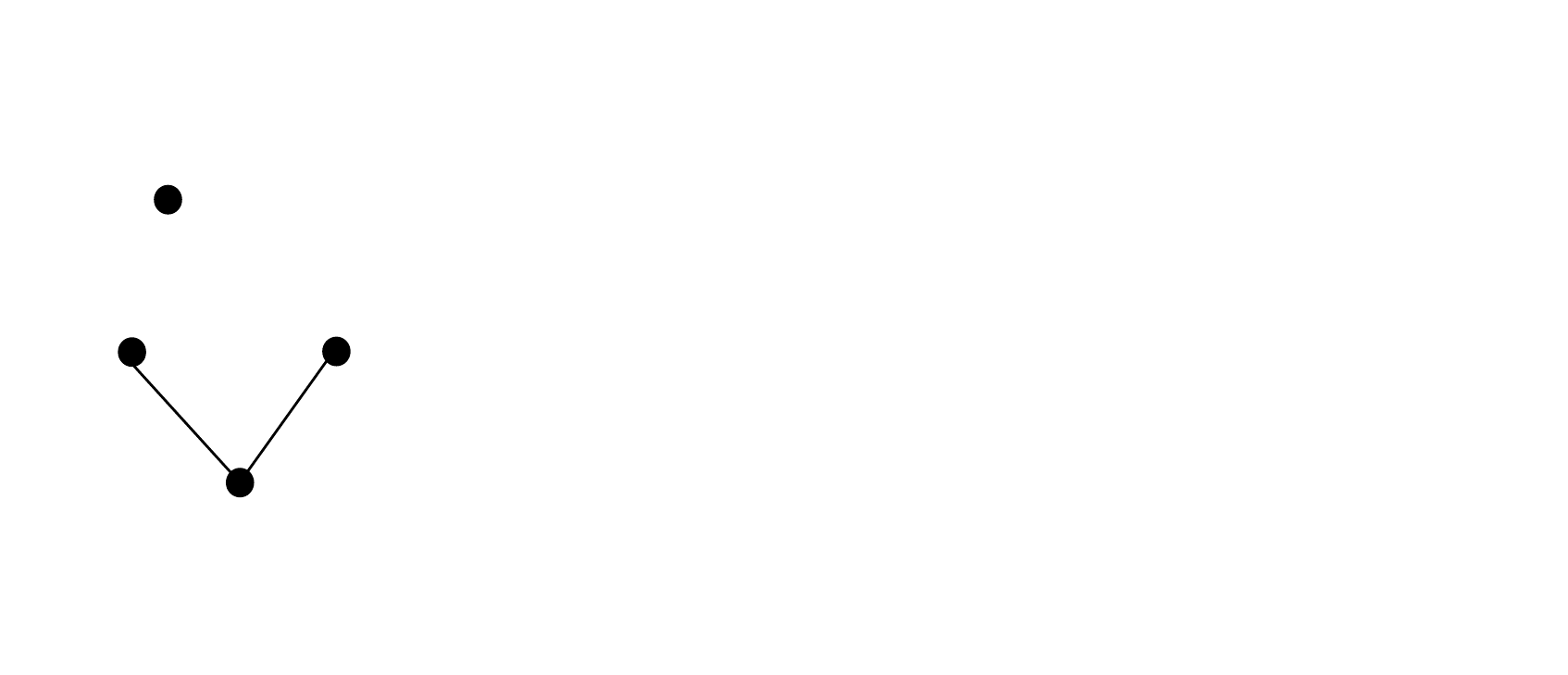\end{center}
\caption{An example of a graph $G^p$ input of $\PartT$ and the corresponding graph $G$ input of $\Cover{2}{3}$.}
\label{fig:3-club(2)}
\end{figure}
Based on Lemma~\ref{lem:cover2-3-Prop1}, we can prove the following result.
\begin{lemma}
\label{lem:cover2-3-V2}
Let $G^p=(V^p,E^p)$ be a graph input of $\PartT$ and let $G=(V,E)$ be the
corresponding instance of $\Cover{2}{3}$. 
Then, given a solution of $\Cover{2}{3}$ on $G=(V,E)$, 
we can compute in polynomial time a solution of $\PartT$ on $G^p=(V^p,E^p)$.
\end{lemma}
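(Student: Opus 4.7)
The plan is to invert the construction from Lemma~\ref{lem:cover2-3-V1}: given a cover $\{V_1, V_2, V_3\}$ of $G$ by three $2$-clubs, I will read off a partition of $V^p$ into three cliques by looking only at how the ``original'' vertices $w_i$ (those corresponding to $v_i \in V^p$) are distributed among the $V_d$'s.

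Concretely, for each $v_i \in V^p$, I would pick the smallest index $d(i) \in \{1,2,3\}$ such that $w_i \in V_{d(i)}$ (this is well-defined since the $V_d$'s cover $V$, hence cover the $w_i$'s), and then set $V^p_d = \{v_i : d(i) = d\}$ for $d=1,2,3$. By construction, $V^p_1, V^p_2, V^p_3$ partition $V^p$ (some parts may be empty, which is harmless for $\PartT$). The construction is clearly polynomial-time.

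The heart of the proof is to verify that each $G^p[V^p_d]$ is a clique. Take two distinct vertices $v_i, v_j \in V^p_d$. Then $w_i, w_j \in V_d$, and since $G[V_d]$ is a $2$-club, we have $d_{G[V_d]}(w_i, w_j) \leq 2$. Because $G[V_d]$ is a subgraph of $G$, this implies $d_G(w_i, w_j) \leq 2$. Now Lemma~\ref{lem:cover2-3-Prop1} gives the key dichotomy: either $\{v_i,v_j\} \in E^p$ (and $d_G(w_i,w_j)=2$) or $\{v_i,v_j\} \notin E^p$ (and $d_G(w_i,w_j) \geq 3$). The latter case is excluded, so $\{v_i,v_j\} \in E^p$, confirming that $V^p_d$ induces a clique in $G^p$.

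The only mild subtlety I anticipate is making sure the argument genuinely uses the diameter bound inside $G[V_d]$ rather than in $G$, and that distances can only decrease when we pass from $G[V_d]$ up to $G$; that monotonicity is what lets Lemma~\ref{lem:cover2-3-Prop1} be applied. Beyond this observation, the proof is essentially a direct translation of the cover back to a partition, with Lemma~\ref{lem:cover2-3-Prop1} doing all the real work.
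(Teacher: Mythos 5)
Your proposal is correct and follows essentially the same route as the paper: both extract $V^p_d$ from the indices of the $w_i$'s in each $2$-club and invoke Lemma~\ref{lem:cover2-3-Prop1} (together with the fact that distances in $G[V_d]$ upper-bound distances in $G$) to conclude that each part induces a clique. The only cosmetic difference is that you disjointify up front by choosing the smallest index $d(i)$, whereas the paper first allows the $V^p_d$ to overlap and then notes that a partition is easily extracted; these are interchangeable since subsets of cliques are cliques.
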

\begin{proof}
Consider a solution of $\Cover{2}{3}$ on $G=(V,E)$ consisting
of three $2$-clubs $G[V_1]$, $G[v_2]$, $G[V_3]$. 
Consider a $2$-club $G[V_d]$, with $1 \leq d \leq 3$.
By Lemma~\ref{lem:cover2-3-Prop1}, it follows that, for each $w_i, w_j \in V_d$, $\{ v_i,v_j \} \in E$.
As a consequence, we can define three cliques $G^p[V^p_1]$, $G^p[V^p_2]$, $G^p[V^p_3]$ in $G^p$ as follows.
For each $d$, with $1 \leq d \leq 3$, $V^p_d$ is defined as:
\[
V^p_d=\{ v_i: w_i \in V_d \}
\]
Next, we show that $G[V^p_d]$, with $1 \leq d \leq 3$, is indeed a clique.
By Lemma~\ref{lem:cover2-3-Prop1} if $w_i, w_j \in V_d$
then it holds $\{ v_i,v_j \} \in E$, thus by construction
$\{v_i,v_j\} \in E^p$ and $G[V^p_d]$ is a clique in $G^p$. 
Moreover, since $V_1 \cup V_2 \cup V_3 = V$,
then $V^p_1 \cup V^p_2 \cup V^p_3 =V^p$. Notice that 
$V^p_1$, $V^p_2$, $V^p_3$ may not be disjoint, but, starting
from $V^p_1$, $V^p_2$, $V^p_3$, it is easy to compute in polynomial
time a partition of $G^p$ in three cliques.
\qed\end{proof}

Now, we can prove the main result of this section.

\begin{theorem}
\label{teo:cover2-3}
$\Cover{2}{3}$ is NP-complete.
\end{theorem}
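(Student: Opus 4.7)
The plan is to assemble the two lemmas already proved in this subsection into a full NP-completeness proof, supplemented only by an easy argument for membership in NP.

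First I would establish membership in NP. A certificate consists of three vertex subsets $V_1, V_2, V_3 \subseteq V$ with $V_1 \cup V_2 \cup V_3 = V$. Given such a certificate, we can verify in polynomial time that $V_1 \cup V_2 \cup V_3 = V$, and for each $d \in \{1,2,3\}$ we can compute the distance matrix of $G[V_d]$ (for instance via BFS from every vertex) and check that every entry is at most $2$. All of this runs in time polynomial in $|V|+|E|$, so $\Cover{2}{3}$ is in NP.

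Next I would derive NP-hardness directly from the preceding two lemmas. The construction from $G^p=(V^p,E^p)$ to $G=(V,E)$ described at the beginning of Section~\ref{sec:compl:2-3} is clearly computable in polynomial time, since $|V|=|V^p|+|E^p|$ and the set of edges $E$ is defined by local adjacency conditions that can be checked in polynomial time. Lemma~\ref{lem:cover2-3-V1} shows that every YES-instance of $\PartT$ on $G^p$ yields a YES-instance of $\Cover{2}{3}$ on $G$, and Lemma~\ref{lem:cover2-3-V2} shows the converse. Hence the map $G^p \mapsto G$ is a polynomial-time many-one reduction from $\PartT$ to $\Cover{2}{3}$. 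Since $\PartT$ is NP-hard~\cite{DBLP:journals/tcs/GareyJS76}, so is $\Cover{2}{3}$.

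Combining membership in NP with NP-hardness gives NP-completeness. There is no real obstacle here, since the two lemmas already do the nontrivial work: the correctness of one direction of the reduction relied on the distance characterization of Lemma~\ref{lem:cover2-3-Prop1}, and that characterization is what makes the clique structure of $G^p$ correspond exactly to the diameter-$2$ structure of $G$. The only remaining step is to observe that the construction is polynomial and that verification of a cover by three $2$-clubs is polynomial, both of which are immediate.
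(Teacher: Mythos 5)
Your proposal is correct and follows essentially the same route as the paper: NP-hardness from Lemmas~\ref{lem:cover2-3-V1} and~\ref{lem:cover2-3-V2} together with the polynomial-time computability of the construction and the NP-hardness of $\PartT$, plus a straightforward polynomial-time verification argument for membership in NP. No meaningful differences to report.
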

\begin{proof}
By Lemma~\ref{lem:cover2-3-V1} and Lemma~\ref{lem:cover2-3-V2} and from the 
NP-hardness of $\PartT$~\cite{DBLP:conf/coco/Karp72}, it follows that
$\Cover{2}{3}$ is NP-hard. The membership to NP follows easily from the fact that,
given three $2$-clubs of $G$,
it can be checked in polynomial time whether they are $2$-clubs and 
cover all vertices of $G$.
\qed\end{proof}


\subsection{$\Cover{3}{2}$ is NP-complete}

In this section we show that $\Cover{3}{2}$ is NP-complete by giving a reduction from 
a variant of $\Sat$ called  $\DoubleSat$.
Recall that a literal is positive if it is
a non-negated variable, while it is negative if it is a negated variable.

Given a collection of clauses $\mathcal{C}= \{ C_1, \dots, C_p \}$ 
over the set of variables $X=\{ x_1, \dots, x_q \}$,
where each $C_i \in \mathcal{C}$, with $1 \leq i \leq p$, contains exactly five literals
and does not contain both a variable and its negation, 
$\DoubleSat$ asks for a truth assignment to 
the variables in $X$ such that each clause $C_i$, with $1 \leq i \leq p$, 
is \emph{double-satisfied}. A clause $C_i$ is double-satisfied
by a truth assignment $f$ to the variables $X$ if there exist 
a positive literal and a negative literal in $C_i$ 
that are both satisfied by $f$.
Notice that we assume that there exist at least one positive literal and at least one negative literal
in each clause $C_i$, with $1 \leq i \leq p$, otherwise
$C_i$ cannot be doubled-satisfied.
Moreover, we assume that each variable in an instance
of $\DoubleSat$ appears both as a positive literal and a negative literal
in the instance. 
Notice that if this is not the case, for example a variable appears only
as a positive literal, we can assign a true value to the variable, as defining
an assignment to false does not contribute to double-satisfy any clause.
First, we show that $\DoubleSat$ is NP-complete, which may be of independent
interest.
\begin{theorem}
\label{teo:5DoubleSatHard}
$\DoubleSat$ is NP-complete.
\end{theorem}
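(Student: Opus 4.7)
The plan is to establish NP-completeness in two parts. Membership in NP is immediate: given a candidate truth assignment to $X$, we scan the $p$ clauses in polynomial time and check that each contains both a positive literal whose variable is set to true and a negative literal whose variable is set to false.

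For NP-hardness, I plan to reduce from \emph{Monotone NAE 3-SAT}, the restriction of not-all-equal satisfiability to clauses consisting of three positive literals (equivalently, 2-colorability of 3-uniform hypergraphs), which is a classical NP-hard problem. Given an instance $\psi$ of Monotone NAE 3-SAT with clauses $C_k = (x_{k,1} \vee x_{k,2} \vee x_{k,3})$, I would build an instance $\psi'$ of $\DoubleSat$ by introducing, for every $k$, four fresh auxiliary variables $y_k^1, y_k^2, y_k^3, y_k^4$ and replacing each $C_k$ by the two 5-literal clauses
\[
C_k'^{\,1} = (x_{k,1} \vee x_{k,2} \vee x_{k,3} \vee \neg y_k^1 \vee \neg y_k^2), \qquad C_k'^{\,2} = (\neg x_{k,1} \vee \neg x_{k,2} \vee \neg x_{k,3} \vee y_k^3 \vee y_k^4).
\]
Each clause has exactly five literals, contains at least one positive and at least one negative literal, and does not contain a variable together with its negation.

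Correctness hinges on the following observation: in $C_k'^{\,1}$, the positive literals are exactly $x_{k,1}, x_{k,2}, x_{k,3}$, so double-satisfaction forces some $x_{k,i}$ to be true; symmetrically, in $C_k'^{\,2}$, the negative literals are exactly $\neg x_{k,1}, \neg x_{k,2}, \neg x_{k,3}$, so double-satisfaction forces some $x_{k,i}$ to be false. Together these two conditions are exactly NAE-satisfaction of $C_k$. For the converse, given an NAE-satisfying assignment of $\psi$, we extend it by choosing the auxiliary variables freely (e.g.\ $y_k^1 = y_k^2 = \mathrm{false}$ and $y_k^3 = y_k^4 = \mathrm{true}$), so that the padding side of each clause is satisfied, yielding a double-satisfying assignment of $\psi'$.

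The main technical obstacle will be arranging the padding so that the auxiliary variables cannot short-circuit the reduction, i.e., so that the ``essential'' side of each clause (containing only the original $x_{k,i}$) is genuinely the only route to satisfying the corresponding NAE constraint. Using disjoint fresh padding per clause and placing the padding on the side opposite to the $x$-literals resolves this cleanly. The convention that every variable of a $\DoubleSat$ instance appear with both polarities can then be enforced by appending, for each $y_k^j$, a trivial dummy 5-literal clause (over four further fresh variables) in which $y_k^j$ occurs with its missing polarity; such a dummy clause is trivially double-satisfiable and therefore does not affect the reduction.
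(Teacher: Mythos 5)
Your proof is correct, but it takes a genuinely different route from the paper's. The paper reduces from \TSat{}: each $3$-clause $C_i$ is duplicated into two $5$-clauses padded with two fresh variables $x_{C,i,1},x_{C,i,2}$ whose polarities are swapped between the two copies, so that the padding alone can double-satisfy at most one of the two copies; this forces some original literal of $C_i$ to be satisfied, i.e., ordinary satisfaction. You instead reduce from Monotone NAE 3-SAT and exploit the polarity structure directly: placing all three original variables positively in one clause and negatively in the other makes the ``positive half'' of $C_k'^{\,1}$ and the ``negative half'' of $C_k'^{\,2}$ consist exclusively of original variables, so double-satisfaction immediately forces at least one $x_{k,i}$ true and at least one false --- exactly NAE satisfaction. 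Your forcing argument is more transparent (no case analysis on which padding literal did the work), at the price of starting from a slightly less standard, though still classical, NP-hard problem. One small point: your final cleanup for the convention that every variable occur with both polarities does not quite close --- the four fresh variables of each dummy clause themselves occur with a single polarity, so the construction regresses. This is harmless for NP-completeness itself, since the paper treats that convention as a without-loss-of-generality normalization (a variable with a single polarity can be fixed to its favourable value in advance), but if you want to enforce it syntactically you should instead reuse the padding variables with both polarities across the two clauses of each gadget, e.g.\ pad $C_k'^{\,1}$ with $\neg y_k^1 \vee \neg y_k^2$ and $C_k'^{\,2}$ with $y_k^1 \vee y_k^2$, and extend an NAE assignment by $y_k^1=\mathrm{true}$, $y_k^2=\mathrm{false}$; the forcing argument is unaffected and no dummy clauses are needed.
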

\begin{proof}
We reduce from $\TSat$, where given a set $X_3$ of variables and a set $\mathcal{C}_3$ of clauses, which are a disjunction of 3 literals (a variable or the negation of a variable), we want to find an assignment to the variables such that all clauses are satisfied.
Moreover, we assume that each clause in $\mathcal{C}_3$ does not contain a positive variable $x$  and its
negation $\overline{x}$, since such a clause is obviously satisfied by any assignment. The same property 
holds also for the instance of $\DoubleSat$ we construct.

Consider an instance $(X_3,\mathcal{C}_3)$ of $\TSat$, we construct an instance
$(X, \mathcal{C})$ of $\DoubleSat$ as follows.
Define $X= X_3 \cup X_N$, where $X_3 \cap X_N = \emptyset$ and $X_N$ is defined
as follows:
\[
X_N = \{ x_{C,i,1}, x_{C,i,2}:C_i \in \mathcal{C}_3    \}
\]
The set $\mathcal{C}$ of clauses is defined as follows:
\[
\mathcal{C}= \{ C_{i,1}, C_{i,2}: C_i \in \mathcal{C}_3 \}
\]
where $C_{i,1}$, $C_{i,2}$ are defined as follows.
Consider $C_i \in \mathcal{C}_3=(l_{i,1} \vee l_{i,2} \vee l_{i,3})$, 
where $l_{i,p}$, with $1 \leq p \leq 3$ is a literal,
that is a variable (a positive literal) or a negated variable (a negative literal), 
the two clauses $C_{i,1}$ and $C_{i,2}$ are defined as follows:
\begin{itemize}
\item $C_{i,1} = l_{i,1} \vee l_{i,2} \vee l_{i,3} \vee x_{C,i,1} \vee \overline{x_{C,i,2}}$
\item $C_{i,2} = l_{i,1} \vee l_{i,2} \vee l_{i,3} \vee \overline{x_{C,i,1}} \vee x_{C,i,2}$
\end{itemize}
We claim that $(X_3,\mathcal{C}_3)$ is satisfiable if and only if 
$(X, \mathcal{C})$ is double-satisfiable. 

Assume that $(X_3,\mathcal{C}_3)$ is satisfiable and let $f$ be an assignment
to the variables on $X$ that satisfies $\mathcal{C}_3$. 
Consider a clause $C_i$ in $\mathcal{C}_3$, with $1 \leq i \leq |\mathcal{C}_3|$. 
Since it is satisfied by $f$, it follows that there
exists a literal $l_{i,p}$ of $C_i$, with $1 \leq p \leq 3$, that is satisfied by $f$. 
Define an assignment $f'$ on $X$ that is identical to $f$
on $X_3$ and, if $l_{i,p}$ is positive, then 
assigns value false to both $x_{C,i,1}$ and $x_{C,i,2}$,
if $l_{i,p}$ is negative, then 
assigns value true to both $x_{C,i,1}$ and $x_{C,i,2}$.
It follows that both $C_{i,1}$ and $C_{i,2}$ are double-satisfied by $f'$.

Assume that $(X,\mathcal{C})$ is double-satisfied by an assignment $f'$.
Consider two clauses $C_{i,1}$ and $C_{i,2}$, with $1 \leq i \leq |\mathcal{C}|$, 
that are double-satisfied by $f'$, we claim that there exists
at least one literal of $C_{i,1}$ and $C_{i,2}$ not in $X_N$ 
which is satisfied. Assume this is not the case, then, if $C_{i,1}$ is double-satisfied,
it follows that $x_{C,i,1}$ is true and $x_{C,i,2}$ is false, thus implying that $C_{i,2}$ 
is not double-satisfied. 
Then, an assignment $f$ that is identical to
$f'$ restricted to $X_3$ satisfies
each clause in $\mathcal{C}$. 

Now, since $\TSat$ is NP-complete~\cite{DBLP:conf/coco/Karp72}, it follows that $\DoubleSat$ 
is NP-hard. 
The membership to NP follows from the observation that, given an assignment
to the variables on $X$, we can check in polynomial-time whether each clause 
in $\mathcal{C}$ is double-satisfied or not.
\qed\end{proof}

Let us now give the construction of the reduction from $\DoubleSat$ 
to $\Cover{3}{2}$. 
Consider an instance of $\DoubleSat$ consisting of a set $\mathcal{C}$ of clauses 
$C_1, \dots, C_p$ over set $X=\{x_1, \dots, x_q\}$ of variables.
We assume that it is not possible to double-satisfy all the clauses 
by setting at most two variables to true or to false 
(this can be easily checked in polynomial-time).

Before giving the details, we present an overview of the reduction.
Given an instance $(X,\mathcal{C})$ of $\DoubleSat$, for each positive literal $x_i$, with $1 \leq i \leq q$, 
we define vertices $x_{i,1}^T$, $x_{i,2}^T$ and for each negative literal
$\overline{x_i}$, with $1 \leq i \leq q$, 
we define a vertex $x_i^F$. Moreover, for each clause 
$C_j \in \mathcal{C}$,
with $1 \leq j \leq p$,
we define a vertex $v_{C,j}$.
We define other vertices to ensure that some vertices have distance not greater 
than three and to force the membership to one of the two $3$-clubs of the solution
(see Lemma~\ref{lem:3-club(2)Prel1}).
The construction implies that for each $i$ with $1 \leq i \leq q$, $x_{i,1}^T$
and $x_i^F$ belong to different $3$-clubs (see Lemma~\ref{lem:3-club(2)Prel});
this corresponds to a truth assignment to the variables in $X$.
Then, we are able to show that each vertex $v_{C,j}$ belongs to the same $3$-club
of a vertex $x_{i,1}^T$, with $1 \leq i \leq q$, and of a vertex $x_h^F$,
with $1 \leq h \leq q$, adjacent to $v_{C,j}$ (see Lemma \ref{lem:3-club(2)2}); 
these vertices correspond to a positive literal
$x_i$ and a negative literal $\overline{x_h}$, respectively, that are satisfied by a truth assignment, 
hence $C_j$ is double-satisfied.
%

Now, we give the details of the reduction.
Let $(X,\mathcal{C})$ be an instance of $\DoubleSat$,
we construct an instance $G=(V,E)$ of $\Cover{3}{2}$  as follows 
(see Fig.~\ref{fig:3club-2}). 
The vertex set $V$ is defined as follows:
\[
V =
\{ r, r', r_T, r'_T, r^*_T, r_F, r'_F \} \cup \{ x_{i,1}^T, x_{i,2}^T, x_{i}^F: x_i \in X  \} \cup 
\{v_{C,j}: C_j \in \mathcal{C} \} \cup \{y_1, y_2, y \} 
\]

The edge set $E$ is defined as follows:
\begin{equation*}
\begin{split}
E = \{ \{ r,r'\}, \{ \{ r', r_T\} , \{ r', r^*_T\}  \{r', r_F \} \}  
\cup \{ \{r_T, x_{i,1}^T\}: x_i \in X  \} \\
\cup \{ \{r_F, x_{i}^F\}: x_i \in X   \} \cup 
\{ \{r'_T, x_{i,1}^T\}: x_i \in X   \} \cup 
\{ \{r'_F, x_{i}^F\}: x_i \in X   \} \cup \\
 \{  \{ x_{i,1}^T, x_{i,2}^T\}: x_i \in X \} \cup 
\{ \{ r^*_T, x_{i,2}^T\}, \{ y_1, x_{i,2}^T\} :x_i \in X  \} \cup \\
\{  \{ x_{i,2}^T, x_j^F\}: x_i,x_j \in X, i \neq j \} \cup
\{ \{ x_{i,1}^T, v_{C,j} \}: x_i \in C_j \} \cup 
\{ \{ x_{i}^F, v_{C,j} \}: \overline{x_i} \in C_j \} 
\cup \\
\{  \{ v_{C,j},y \}: C_j \in \mathcal{C} \}  \cup  
\{ \{ y,y_2 \}, \{ y_1,y_2 \}, \{y_1,r'_T \}, \{y_1,r'_F \} \}
\end{split}
\end{equation*}

%

We start by proving some properties of the graph $G$.

\begin{lemma}
\label{lem:3-club(2)Prel1}
Consider an instance $(\mathcal{C}, X)$ of $\DoubleSat$ and let $G=(V,E)$ be the corresponding
instance of $\Cover{3}{2}$. 
Then, (1) $d_G(r',y)>3$, (2) $d_G(r,y)>3$, (3) $d_G(r,v_{C,j})>3$, for each $j$ with $1 \leq j \leq p$,  and (4) $d_G(r,r'_F)>3$, $d_G(r,r'_T)>3$.
\end{lemma}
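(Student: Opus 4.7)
The plan is to run a BFS from $r'$, identify all vertices within distance $3$, and then observe that every item on the lemma's list is either outside this set or can be reduced to a statement about distances from $r'$. The key structural observation, which drives everything, is that the vertex $r$ has degree one in $G$: its unique neighbor is $r'$. Therefore for every vertex $v \neq r$, $d_G(r,v) = 1 + d_G(r',v)$. This reduces claims (2), (3), (4) about distances from $r$ to statements about distances from $r'$ being at least $2$, $3$, $3$ respectively, and only (1) genuinely requires analysis up to radius $3$ around $r'$.

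First I would compute the BFS layers from $r'$ explicitly from the edge list. The neighbors of $r'$ are exactly $\{r, r_T, r^*_T, r_F\}$. Their remaining neighbors give layer~$2$: from $r_T$ the vertices $x_{i,1}^T$, from $r^*_T$ the vertices $x_{i,2}^T$, and from $r_F$ the vertices $x_i^F$. Then layer~$3$ is obtained by inspecting the adjacencies of layer~$2$ vertices: $x_{i,1}^T$ reaches $r'_T$ and some $v_{C,j}$; $x_i^F$ reaches $r'_F$ and some $v_{C,j}$; $x_{i,2}^T$ reaches $y_1$; and the cross edges $\{x_{i,2}^T,x_j^F\}$ ($i\neq j$) stay inside layer~$2$. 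So the ball of radius~$3$ around $r'$ is contained in
\[
\{r'\}\cup\{r,r_T,r^*_T,r_F\}\cup\{x_{i,1}^T,x_{i,2}^T,x_i^F:x_i\in X\}\cup\{r'_T,r'_F,y_1\}\cup\{v_{C,j}:C_j\in\mathcal{C}\}.
\]

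From this inventory, (4) is immediate since $r'_T,r'_F$ first appear in layer~$3$, giving $d_G(r',r'_T)=d_G(r',r'_F)=3$, so the distances from $r$ are $4>3$. Likewise (3) follows because every $v_{C,j}$ first appears in layer~$3$ (its only neighbors in $G$ are $y$, vertices $x_{i,1}^T$ with $x_i\in C_j$, and vertices $x_i^F$ with $\overline{x_i}\in C_j$, and under our assumption each clause contains at least one positive and at least one negative literal, so at least one such layer-$2$ neighbor exists), hence $d_G(r,v_{C,j})=4>3$. For (1), I need to show $y$ does \emph{not} lie in the layer-$3$ ball; the neighbors of $y$ in $G$ are exactly $y_2$ and the vertices $v_{C,j}$. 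Since $y_2$'s only neighbors are $y$ and $y_1$, the distance from $r'$ to $y_2$ is one more than $d_G(r',y_1)=3$, hence $d_G(r',y_2)=4$, and the $v_{C,j}$'s lie in layer~$3$, so $d_G(r',y)\geq 4>3$. Finally (2) follows by adding one: $d_G(r,y)=1+d_G(r',y)\geq 5>3$.

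The main (minor) obstacle is just the bookkeeping of the BFS layers, in particular checking that no shortcut has been overlooked when going from layer~$2$ to layer~$3$. Two edge families could in principle create shortcuts and must be ruled out explicitly: the cross edges $\{x_{i,2}^T,x_j^F\}$ with $i\neq j$, which however join two layer-$2$ vertices and therefore do not shorten distances to any of the target vertices; and the edges incident to $y_1$, namely $\{y_1,x_{i,2}^T\}$, $\{y_1,r'_T\}$, $\{y_1,r'_F\}$, $\{y_1,y_2\}$, which only confirm that $y_1\in$ layer~$3$ and push $y_2$ to layer~$4$. Once these checks are made, all four inequalities fall out of the layer description above.
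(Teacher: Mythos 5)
Your proof is correct and takes essentially the same route as the paper: both are direct distance computations in the gadget, with the paper checking (1) by forward path analysis from $r'$ and (4) via $N^2_G(r)$, while you organize the identical checks as BFS layers from $r'$ and reduce (2)--(4) to them using $d_G(r,v)=1+d_G(r',v)$ from $\deg(r)=1$.
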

\begin{proof}
We start by proving (1). 
Notice that any path from $r'$ to $y$ must pass through $r_T$, $r^*_T$ or $r_F$. 
Each of $r_T$, $r^*_T$ or $r_F$ is adjacent to vertices $x_{i,1}^T$, $x_{i,2}^T$ and 
$x_i^F$, with $1 \leq i \leq q$ (in addition to $r'$), and none of these vertices is adjacent 
to $y$, thus concluding that $d_G(r',y)>3$. 
Moreover, observe that for each vertex $v_{C,j}$, with $1 \leq j \leq p$, 
there exists a vertex %
$x_{i,1}^T$, with $1 \leq i \leq q$, or $x_h^F$,
with $1 \leq h \leq q$, that is adjacent to $v_{C,j}$, with 
$1 \leq j \leq p$, 
thus $d_G(r',v_{C_j})=3$, for each $j$ with $1 \leq j \leq p$.
As a consequence of (1), it follows that (2) holds, that is $d_G(r,y)>3$. 
Since $d_G(r',v_{C_j})=3$, for each $j$ with $1 \leq j \leq p$, it holds 
(3) $d_G(r,v_{C,j})>3$.

Finally, we prove (4). 
Notice that 
$N^2_G(r) = \{ r',  r^*_T, r_T, r_F \}$
and that none of the vertices in $N^2_G(r)$
is adjacent to $r'_F$ and $r'_T$, thus $d_G(r,r'_F)>3$.
\qed\end{proof}


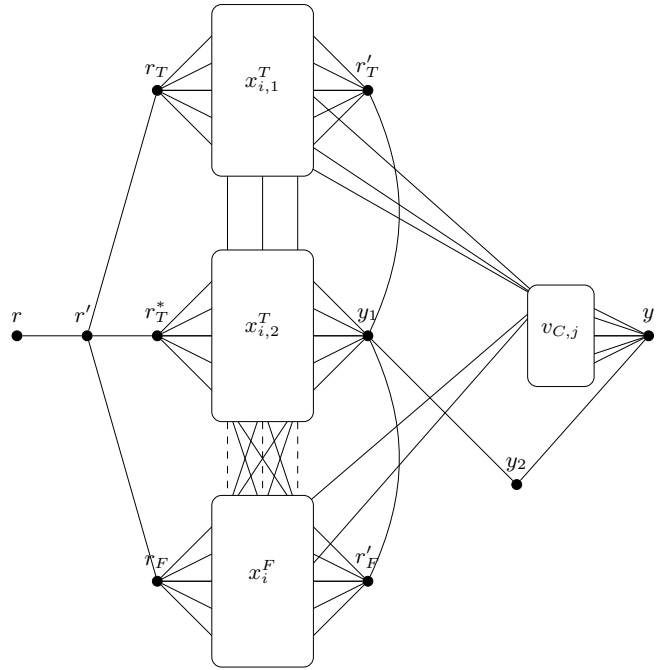
\begin{figure}
\centering
\resizebox{.6\textwidth}{!}
{
\begin{tikzpicture}
\tikzstyle{vertex}=[circle, draw, inner sep=0pt, minimum size=4pt, fill=black]
\node[vertex, label=$r$] (r) at (0,0) {};
\node[vertex, label=$r'~$, right of=r] (r') {};

\node[vertex, label=$r^*_T$, right of=r'] (rT) {};
\node[vertex, label=$r_T$, above of=rT, node distance=3.5cm] (r*) {};
\node[vertex, label=$r_F$, below of=rT, node distance=3.5cm] (rF) {};

\node[draw,right of=rT] (X1T1){};
\node[draw,below of=X1T1] (X1T2){};
\node[draw,above of=X1T1] (X1T3){};
\node[draw,right of=X1T3] (X1T4){};
\node[draw,right of=X1T2] (X1T5){};
\node[draw,right of=X1T1] (X1T6){};
\node[draw,right of=X1T3,node distance=0.5cm] (X1T7){};
\node[draw,right of=X1T2,node distance=0.5cm] (X1T8){};

\node[right of=r*] (X2T1){};
\node[below of=X2T1] (X2T2){};
\node[above of=X2T1] (X2T3){};
\node[right of=X2T3] (X2T4){};
\node[right of=X2T2] (X2T5){};
\node[right of=X2T1] (X2T6){};
\node[draw,right of=X2T3,node distance=0.5cm] (X2T7){};
\node[draw,right of=X2T2,node distance=0.5cm] (X2T8){};

\node[draw,right of=rF] (XF1){};
\node[draw,below of=XF1] (XF2){};
\node[draw,above of=XF1] (XF3){};
\node[draw,right of=XF3] (XF4){};
\node[draw,right of=XF2] (XF5){};
\node[draw,right of=XF1] (XF6){};
\node[draw,right of=XF3,node distance=0.5cm] (XF7){};
\node[draw,right of=XF2,node distance=0.5cm] (XF8){};

\node[draw,right of=X1T6,node distance=3.5cm] (C1){};
\node[draw,below of=C1,node distance=0.5cm] (C2){};
\node[draw,above of=C1,node distance=0.5cm] (C3){};
\node[draw,right of=C1,node distance=0.5cm] (C4){};
\node[draw,right of=C2,node distance=0.5cm] (C5){};
\node[draw,right of=C3,node distance=0.5cm] (C6){};

\node[vertex, label=$y_1$, right of=X1T6] (r'T) {};
\node[vertex, label=$r'_T$, right of=X2T6] (y1) {};
\node[vertex, label=$r'_F$, right of=XF6] (r'F) {};
\node[vertex, label=$y$, right of=C4] (y) {};

\draw (r) -- (r');
\draw (rT) -- (r');
\draw (r*) -- (r');
\draw (rF) -- (r');

\draw (rT) -- (X1T3) ;
\draw (rT) -- (X1T1) ;
\draw (rT) -- (X1T2) ;
\draw (rT) -- (X1T4) ;
\draw (rT) -- (X1T5) ;
\draw (rT) -- (X1T6) ;
\draw (r'T) -- (X1T3) ;
\draw (r'T) -- (X1T1) ;
\draw (r'T) -- (X1T2) ;
\draw (r'T) -- (X1T4) ;
\draw (r'T) -- (X1T5) ;
\draw (r'T) -- (X1T6) ;

\draw (r*) -- (X2T3);
\draw (r*) -- (X2T2);
\draw (r*) -- (X2T1);
\draw (r*) -- (X2T4);
\draw (r*) -- (X2T5);
\draw (r*) -- (X2T6);
\draw (y1) -- (X2T3);
\draw (y1) -- (X2T2);
\draw (y1) -- (X2T1);
\draw (y1) -- (X2T4);
\draw (y1) -- (X2T5);
\draw (y1) -- (X2T6);

\draw (rF) -- (XF3);
\draw (rF) -- (XF2);
\draw (rF) -- (XF1);
\draw (rF) -- (XF4);
\draw (rF) -- (XF5);
\draw (rF) -- (XF6);
\draw (r'F) -- (XF3);
\draw (r'F) -- (XF2);
\draw (r'F) -- (XF1);
\draw (r'F) -- (XF4);
\draw (r'F) -- (XF5);
\draw (r'F) -- (XF6);

\draw (X1T7) -- (X2T8);
\draw (X1T3) -- (X2T3);
\draw (X1T4) -- (X2T4);

\draw (C3) -- (X2T1);
\draw (C3) -- (X2T3);
\draw (C3) -- (XF4);
\draw (C3) -- (X2T5);
\draw (C3) -- (XF6);
%

\draw (XF3) edge[dashed] (X1T2);
\draw (XF3) -- (X1T8);
\draw (XF3) -- (X1T5);
\draw (XF7) -- (X1T2);
\draw (XF7) -- (X1T5);
\draw (XF4) -- (X1T2);
\draw (XF7) edge[dashed] (X1T8);
\draw (XF4) -- (X1T8);
\draw (XF4) edge[dashed] (X1T5);

\draw (y) -- (C1);
\draw (y) -- (C2);
\draw (y) -- (C3);
\draw (y) -- (C4);
\draw (y) -- (C5);
\draw (y) -- (C6);

\node[vertex, label=$y_2$, below right of=r'T, node distance=3cm] (y2) {};

\draw (r'T) edge[bend right=25] (y1);
\draw (r'T) -- (y2) -- (y);
\draw (r'F) edge[bend right=25] (r'T);


\node[draw,fill=white,rectangle,rounded corners,fit= (X1T1) (X1T3) (X1T2) (X1T4)] (X1T) {$x_{i,2}^T$};
\node[draw,fill=white,rectangle,rounded corners,fit= (X2T1) (X2T3) (X2T2) (X2T4)] (X2T) {$x_{i,1}^T$};
\node[draw,fill=white,rectangle,rounded corners,fit= (XF1) (XF3) (XF2) (XF4)] (XF) {$x_{i}^F$};
\node[draw,fill=white,rectangle,rounded corners,fit= (C1) (C2) (C3) (C4) (C5) (C6)] (C) {$v_{C,j}$};
\end{tikzpicture}
}
\caption{Schematic construction for the reduction from $\DoubleSat$ to $\Cover{3}{2}$. 
\label{fig:3club-2}
}
\end{figure}

Consider two sets $V_1 \subseteq V$ and $V_2 \subseteq V$, such that $G[V_1]$ and $G[V_2]$ 
are two $3$-clubs of $G$ that cover $G$.
As a consequence of Lemma~\ref{lem:3-club(2)Prel1}, 
it follows that $r$ and $r'$ are in exactly one of $G[V_1]$, $G[V_2]$, w.l.o.g. $G[V_1]$, 
while $r'_T$, $r'_F$, $y$ and $v_{C,j}$, for each $j$ with $1 \leq j \leq p$, belong 
to $G[V_2]$ and not to $G[V_1]$.

Next, we show a crucial property of the graph $G$ built by the reduction.

\begin{lemma}
\label{lem:3-club(2)Prel}
Given an instance $(\mathcal{C}, X)$ of $\DoubleSat$, let $G=(V,E)$ be the corresponding
instance of $\Cover{3}{2}$. Then, for each $i$ with $1 \leq i \leq q$, 
$d_G(x_{i,1}^T,x_i^F)>3$.
\end{lemma}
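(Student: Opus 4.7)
The plan is to prove this by enumerating all candidate short paths between $x_{i,1}^T$ and $x_i^F$ and showing none has length at most $3$. The argument breaks into three checks: no edge, no common neighbor, and no length-$3$ path.

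First I would list the neighborhoods explicitly from the edge set. Specifically,
\[
N_G(x_{i,1}^T) = \{r_T,\ r'_T,\ x_{i,2}^T\} \cup \{v_{C,j} : x_i \in C_j\},
\]
\[
N_G(x_i^F) = \{r_F,\ r'_F\} \cup \{x_{j,2}^T : j \neq i\} \cup \{v_{C,j} : \overline{x_i} \in C_j\}.
\]
There is no edge $\{x_{i,1}^T,x_i^F\}$ (ruling out distance $1$). For distance $2$, I would intersect the two neighborhoods: the only family of vertices that appears in both is the $v_{C,j}$'s, and a shared $v_{C,j}$ would require $x_i \in C_j$ and $\overline{x_i} \in C_j$ simultaneously, which is excluded by the standing assumption that no clause contains a variable together with its negation. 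Hence $d_G(x_{i,1}^T,x_i^F)\geq 3$.

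The heart of the proof is ruling out distance exactly $3$. For each $u \in N_G(x_{i,1}^T)$ I would check that $N_G(u)$ is disjoint from $N_G(x_i^F)$. For $u = r_T$, $N_G(r_T) = \{r'\} \cup \{x_{j,1}^T\}_j$, and neither $r'$ nor any $x_{j,1}^T$ lies in $N_G(x_i^F)$. For $u = r'_T$, $N_G(r'_T) = \{y_1\} \cup \{x_{j,1}^T\}_j$; here the key observation is that while $r'_T$ and $r'_F$ are both attached to $y_1$, they are \emph{not} adjacent, so $y_1 \notin N_G(x_i^F)$. For $u = x_{i,2}^T$, $N_G(x_{i,2}^T) = \{x_{i,1}^T, r^*_T, y_1\} \cup \{x_j^F : j \neq i\}$; crucially, $x_{i,2}^T$ is joined to $x_j^F$ only for $j \neq i$, so no $x_j^F$ in this set is a neighbor of $x_i^F$ either. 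For $u = v_{C,j}$ with $x_i \in C_j$, $N_G(v_{C,j}) = \{x_{k,1}^T : x_k \in C_j\} \cup \{x_k^F : \overline{x_k} \in C_j\} \cup \{y\}$; none of these lies in $N_G(x_i^F)$, in particular no $x_k^F$ is adjacent to $x_i^F$ and $y$ is only adjacent to the $v_{C,\cdot}$'s and $y_2$.

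The main obstacle is really just bookkeeping: making sure the case enumeration of $u \in N_G(x_{i,1}^T)$ and $v \in N_G(x_i^F)$ is complete, and paying attention to two subtle points in the construction, namely that $\{r'_T,r'_F\}$ is not an edge (they only meet through $y_1$) and that $\{x_{i,2}^T, x_i^F\}$ is not an edge (the index constraint $i \neq j$ is what kills the otherwise-obvious length-$3$ path $x_{i,1}^T - x_{i,2}^T - x_i^F$-candidate). Once all four cases are verified, combined with the distance-$2$ analysis, the claim $d_G(x_{i,1}^T,x_i^F) > 3$ follows.
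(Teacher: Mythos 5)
Your proposal is correct and follows essentially the same route as the paper's proof: a case analysis over the four possible second vertices of a short path from $x_{i,1}^T$ (namely $r_T$, $r'_T$, $x_{i,2}^T$, and the $v_{C,j}$'s), verifying in each case that no neighbor of that vertex is adjacent to $x_i^F$; you merely phrase it as checking that $N_G(u)\cap N_G(x_i^F)=\emptyset$, and you are somewhat more explicit than the paper in separately ruling out distances $1$ and $2$. The two construction details you flag (no edge $\{r'_T,r'_F\}$, and $\{x_{i,2}^T,x_j^F\}$ only for $j\neq i$) are exactly the points the paper's argument also relies on.
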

\begin{proof}
Consider a path $\pi$ of minimum length that connects $x_{i,1}^T$ and $x_i^F$, 
with $1 \leq i \leq q$.
First, notice that, by construction, the path $\pi$ after $x_{i,1}^T$ must pass through one of 
these vertices: $r_T$, $r'_T$, $x_{i,2}^T$ or $v_{C,j}$, with $1 \leq j \leq p$.

We consider the first case, that is the path $\pi$ after $x_{i,1}^T$  passes through $r_T$. 
Now, the next vertex in $\pi$ is either $r'$ or $x_{h,1}^T$, with $1 \leq h \leq q$. 
Since  both $r'$ and $x_{h,1}^T$ are not adjacent to $x_i^F$, 
it follows that in this case the path $\pi$ has length greater than three.

We consider the second case, that is the path $\pi$ after $x_{i,1}^T$ passes through $r'_T$. 
Now, after $r'_T$, $\pi$ passes through either $y_1$ or $x_{h,1}^T$, with $1 \leq h \leq q$. 
Since both $y_1$ and $x_{h,1}^T$ are not adjacent to $x_i^F$, it follows that in this case
the path $\pi$ has length greater than three.

We consider the third case, that is the path after $x_{i,1}^T$  passes through $x_{i,2}^T$. 
Now, the next vertex of $\pi$  is either $r^*_T$ or $y_1$ or $x_h^F$, with $1 \leq h \leq q$
and $h \neq i$. 
Since $r^*_T$, $y_1$ and $x_{h}^F$ are not adjacent to $x_i^F$, it follows that in this case 
the path $\pi$ has length greater than three.

We consider the last case, that is the path after $x_{i,1}^T$ passes through $v_{C,j}$, 
with $1 \leq j \leq p$. 
We have assumed that $x_i$ and $\overline{x_i}$ do not belong to the same clause, thus by construction $x_{i}^F$ is not incident in $v_{C,j}$.
It follows that after $v_{C,j}$, the path $\pi$ must pass through either $y$ or 
$x_{h,1}^T$, with $1 \leq h \leq q$, or $x_{z}^F$, $1 \leq z \leq q$ and $z \neq i$.
Once again, since $y$, $x_{h,1}^T$ and $x_{z}^F$ are not adjacent to $x_i^F$, 
it follows that also in this case the path $\pi$ has length greater than three, thus concluding the proof.
\qed\end{proof}

Now, we are able to prove the main results of this section.

\begin{lemma}
\label{lem:3-club(2)1}
Given an instance $(\mathcal{C}, X)$ of $\DoubleSat$, let $G=(V,E)$ be the corresponding
instance of $\Cover{3}{2}$. 
Then, given a truth assignment that double-satisfies $\mathcal{C}$, 
we can compute in polynomial-time two $3$-clubs that cover $G$.
\end{lemma}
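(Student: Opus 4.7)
Given a truth assignment $f$ that double-satisfies $\mathcal{C}$, the plan is to produce a covering $V_1,V_2$ of $V$ guided by $f$ in a natural way: the ``true'' side will contain the $r$--spine together with the positive-literal vertices of the true variables and the $x_i^F$ vertices of the false variables, while the ``clause'' side will contain $y,y_1,y_2,r'_T,r'_F$, all clause vertices $v_{C,j}$, and the opposite selection from the variable gadgets. Concretely, I would define
\[
V_1 = \{r,r',r_T,r_T^*,r_F\} \cup \{x_{i,1}^T, x_{i,2}^T : f(x_i)=\text{false}\} \cup \{x_i^F : f(x_i)=\text{true}\},
\]
\[
V_2 = \{y,y_1,y_2,r'_T,r'_F\} \cup \{v_{C,j}: 1\leq j\leq p\} \cup \{x_{i,1}^T, x_{i,2}^T : f(x_i)=\text{true}\} \cup \{x_i^F : f(x_i)=\text{false}\}.
\]
This covers $V$ by inspection, and is consistent with what Lemma~\ref{lem:3-club(2)Prel1} and Lemma~\ref{lem:3-club(2)Prel} force about where $r,r',r'_T,r'_F,y,v_{C,j}$ and the pairs $x_{i,1}^T,x_i^F$ may lie.

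The first step is to verify that $G[V_1]$ is a $3$-club. Here $r'$ is a hub adjacent to $r,r_T,r_T^*,r_F$, and each of $r_T,r_T^*,r_F$ is in turn adjacent to the corresponding variable vertices present in $V_1$ (by the edges $\{r_T,x_{i,1}^T\}$, $\{r_T^*,x_{i,2}^T\}$, $\{r_F,x_i^F\}$). Distances between variable vertices of $V_1$ are then at most $3$: pairs of the same type are joined through the appropriate ``$r_\bullet$'' hub in two steps, and a pair $x_{i,1}^T,x_j^F$ (with $f(x_i)=$ false and $f(x_j)=$ true, so necessarily $i\neq j$) uses the edge $\{x_{i,2}^T,x_j^F\}$ together with $\{x_{i,1}^T,x_{i,2}^T\}$. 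This part is routine case analysis and does not require $f$ to double-satisfy $\mathcal{C}$.

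The main obstacle, and the crux of the lemma, is verifying that $G[V_2]$ is a $3$-club; this is where double-satisfaction is essential. The delicate distances to check are those from $r'_T$ and from $r'_F$ to each clause vertex $v_{C,j}$, and the distances from $v_{C,j}$ to the non-incident variable vertices of $V_2$. Since $C_j$ is double-satisfied by $f$, there is a positive literal $x_i\in C_j$ with $f(x_i)=$ true and a negative literal $\overline{x_h}\in C_j$ with $f(x_h)=$ false. By construction, $x_{i,1}^T\in V_2$ is adjacent to both $r'_T$ and $v_{C,j}$, giving $d_{G[V_2]}(r'_T,v_{C,j})\leq 2$, and symmetrically $x_h^F\in V_2$ yields $d_{G[V_2]}(r'_F,v_{C,j})\leq 2$. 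The remaining distances are short by hub arguments: $y$ is adjacent to every $v_{C,j}$; $y_1$ reaches $r'_T,r'_F$ directly and $y,v_{C,j}$ in at most three steps via $y_2$; pairs of variable vertices of the same colour in $V_2$ use the hubs $r'_T,r'_F,y_1$ (in two steps), and a mixed pair $x_{i,1}^T,x_h^F$ (with $f(x_i)=$ true, $f(x_h)=$ false, $i\neq h$) uses the edge $\{x_{i,2}^T,x_h^F\}$ together with $\{x_{i,1}^T,x_{i,2}^T\}$; finally, a vertex $x_{i,2}^T\in V_2$ reaches a non-incident $v_{C,j}$ via the double-satisfying $x_h^F\in V_2$ (with $h\neq i$ because otherwise $x_i$ would be both true and false), and similarly for $x_i^F$. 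Collecting these bounds shows $\mathrm{diam}(G[V_2])\leq 3$, completing the reduction. The construction is clearly polynomial.
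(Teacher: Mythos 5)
Your construction of $V_1$ and $V_2$ and your use of double-satisfaction (each $v_{C,j}$ acquires an adjacent $x_{i,1}^T$ and an adjacent $x_h^F$ inside $V_2$, pinning its distance to $r'_T$ and to $r'_F$ at two) coincide exactly with the paper's proof, so this is essentially the same argument. The only point your sketch leaves implicit is that the distances in $G[V_2]$ from $y$ and $y_2$ to the literal vertices $x_{i,1}^T$ and $x_h^F$ rely on the standing assumption that every variable occurs both positively and negatively (so that each such vertex is adjacent to some $v_{C,j}$); the paper recalls this assumption explicitly before running the same case analysis.
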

\begin{proof}
Consider a truth assignment $f$ on the set $X$ of variables that double-satisfies $\mathcal{C}$. 
In the following we construct two $3$-clubs $G[V_1]$ and $G[V_2]$ that cover $G$.
The two sets $V_1$, $V_2$ are defined as follows:

\[
V_1 = \{ r,r',r_T, r^*_T, r_F \} \cup \{ x_{i,1}^T, x_{i,2}^T:f(x_i)=false \} \cup 
\{ x_{i}^F, :f(x_i)= true \}
\]

\[
V_2 = \{r'_T, r'_F, y, y_1, y_2\} \cup 
\{ x_{i,1}^T, x_{i,2}^T:f(x_i) = true \} \cup \{ x_{i}^F:f(x_i)= false \cup \}
\]
\[ 
\{ v_{C,j}: 1 \leq j \leq p \}
\]

Next, we show that $G[V_1]$ and $G[V_2]$ are indeed two $3$-clubs that cover $G$.
First, notice that $V_1 \cup V_2 =V$, hence $G[V_1]$ and $G[V_2]$ cover $G$.
Next, we show that both $G[V_1]$ and $G[V_2]$ are indeed $3$-clubs. 

Let us first consider $G[V_1]$.
By construction, $d_{G[V_1]}(r,x_{i,1}^T) = 3$ and $d_{G[V_1]}(r,x_{i,2}^T) = 3$, for each $i$ with $1 \leq i \leq i \leq q$, and $d_{G[V_1]}(r,x_{i}^F) = 3$, for each $i$ with $1 \leq i \leq i \leq q$.
Moreover, $d_{G[V_1]}(r',x_{i,1}^T) = 2$ and $d_{G[V_1]}(r',x_{i,2}^T) = 2$, for each $i$ with $1 \leq i \leq q$, and $d_{G[V_1]}(r',x_{i}^F) = 2$, for each $i$ with $1 \leq i \leq i \leq q$.
As a consequence, it holds that $r_T$, $r'_T$ and $r_F$ have distance at most three in $G[V_1]$ from each vertex $x_{i,1}^T$, from each vertex $x_{i,2}^T$, and from each vertex $x_{i}^F$.
Since $r$, $r_T$, $r^*_T$ and $r_F$ are in $N(r')$, it follows that $r$, $r'$, $r_T$, $r^*_T$ and $r_F$ are at distance at most $2$ in $G[V_1]$.
Hence, we focus on vertices $x_{i,1}^T$, with $1 \leq i \leq q$, $x_{h,2}^T$, with $1 \leq h \leq q$ and $x_{j}^F$, with $1 \leq j \leq q$.
Since there exists a path that passes trough $x_{i,1}^T$, $r_T$, $x_{h,1}^T$ and $x_{h,2}^T$, vertices $x_{i,1}^T$, 
$x_{h,1}^T$ are at distance at most two in $G[V_1]$, while $x_{i,1}^T$, $x_{h,2}^T$ are at distance at most three 
in $G[V_1]$ (if $i=h$ they are at distance one).
Vertices $x_{h,2}^T$ and $x_{j}^F$ are at distance one in $G[V_1]$, since $h \neq j$ and $\{ x_{h,2}^T,x_j^F \} \in E$ 
by construction.
Finally, $x_{i,1}^T$ and $x_{j}^F$ are at distance two in $G[V_1]$, since there exists a path that passes trough $x_{i,1}^T$, $x_{i,2}^T$ and $x_{j}^F$ in $G[V_1]$, as $i \neq j$. 
It follows that $G[V_1]$ is a $3$-club.

We now consider $G[V_2]$. We recall that, for each $i$ with $1 \leq i \leq q$,
if $x_{i,1}^T$, $x_{i,2}^T \in V_2$, then $x_i^F \in V_1$.
Furthermore, we recall that we assume that each $x_i$ appears as
a positive and a negative literal in the instance of $\DoubleSat$, thus 
each vertex $x_{i,1}^T$, with $1 \leq i \leq q$, and each vertex 
$x_{h}^F$, with $1 \leq h \leq q$, are connected to some $V_{C,j}$, 
with $1 \leq j \leq p$.

First, notice that vertex $y$ is at distance at most three in $G[V_2]$ from each vertex of $V_2$, since it has distance one in $G[V_2]$ from each vertex $v_{C,j}$, with $1 \leq j \leq p$, thus distance two from $x_{i,1}^T$, with $1 \leq i \leq q$, and $x_{h}^F$, with $1 \leq h \leq q$, and three from $x_{i,2}^T$, with $1 \leq i \leq q$, $r'_T$ and $r'_F$.
Since $y$ is adjacent to $y_2$, it has distance one from $y_2$ and two from $y_1$.

Now, consider a vertex $v_{C,j}$, with $1 \leq j \leq p$. 
Since $f$ double-satisfies $\mathcal{C}$, it follows that there exist two vertices in $V_2$, $x_{i,1}^T$, with $1 \leq i \leq q$, and $x_{z}^F$, with $1 \leq z \leq q$, which are connected to $v_{C,j}$. 
It follows that $v_{C,j}$ has distance $2$ in $G[V_2]$ from $r'_T$ and from $r'_F$, 
and at most $3$ from each $x_{h,1}^T \in V_2$, with $1 \leq h \leq q$, and from each $x_{z}^F \in V_2$, with $1 \leq z \leq q$.
Furthermore, notice that, since $v_{C,j}$ is adjacent to $x_z^F$ and $x_z^F$ is adjacent to each $x_{h,2}^T \in V_2$, with $1 \leq h \leq q$ and $h \neq z$, then $v_{C,j}$ has distance at most two in $G[V_2]$ from each $x_{h,2}^T \in V_2$. 
Finally, since $v_{C,j}$ is adjacent to $y$, it has distance two and three respectively, from $y_2$ and $y_1$, in $G[V_2]$.

Consider a vertex $x_{i,1}^T \in V_2$, with $1 \leq i \leq q$. 
We have already shown that it
%
has distance at most three in $G[V_2]$ from any $v_{C,j}$, with $1 \leq j \leq p$, and two from $y$.
Since $x_{i,1}^T$ is adjacent to $r'_T$, it has distance at most two from each other vertex $x_{h,1}^T$, with $1 \leq h \leq q$, and three from each other vertex $x_{h,2}^T$ of $G[V_2]$. 
Moreover, it has distance two from $y_1$ and three from $y_2$ and $r'_F$. 
Since $x_{i,2}^T$ is adjacent to every vertex $x_z^F \in V_2$, with $1 \leq z \leq q$,
as $z \neq i$,
it follows that $x_{h,1}^T$ has distance at most two from every vertex $x_z^F \in V_2$.

Consider a vertex $x_{i,2}^T \in V_2$, with $1 \leq i \leq q$. 
We have already shown that it has distance at most two from each $v_{C,j}$
in $G[V_2]$. 
Since it is connected to $x_{i,1}^T$, it has distance three from $y$ and two from $r'_T$ in $G[V_2]$.
By construction $x_{i,2}^T$ is adjacent to every vertex $x_z^F \in V_2$, with $1 \leq z \leq q$,
$x_{i,2}^T$ has distance at most two from $r'_F$ in $G[V_2]$.
Moreover, $x_{i,2}^T$ has distance two from each vertex $x_{h,2}^T$ in $G[V_2]$, 
with $1 \leq i \leq q$, since by construction they are both adjacent to $y_1$.
Since $x_{i,2}^T$ is adjacent to $y_1$, thus it has distance at most two from $y_2$ in $G[V_2]$.

Consider a vertex $x_h^F$, with $1 \leq h \leq q$. 
It has distance one from $r'_F$ in $G[V_2]$, and thus distance two from $y_1$ and three from $y_2$ in $G[V_2]$.
Moreover, $x_h^F$ is adjacent to each $x_{i,2}^T \in V_2$, with $1 \leq i \leq q$, 
thus it has distance two from each $x_{i,1}^T$ and distance three from $r'_T$ in $G[V_2]$. 
Since by construction there exists at least one $v_{C,j}$, with $1 \leq j \leq p$, adjacent to
$x_h^F$, thus $x_h^F$ has distance two from $y$ and three from each $v_{C,z}$ in $G[V_2]$.

Finally, we consider vertices $r'_T$, $r'_F$, $y_1$ and $y_2$. 
Notice that it suffices to show that these vertices have pairwise distance at most three in $G[V_2]$, 
since we have previously shown
that any other vertex of $V_2$ has distance at most three from these vertices in $G[V_2]$. 
Since $r'_T, r'_F, y_2 \in N(y_1)$, they are all at distance at most two.
It follows that $G[V_2]$ is a $3$-club, thus concluding the proof.
\qed\end{proof}

\begin{lemma}
\label{lem:3-club(2)2}
Given an instance $(\mathcal{C}, X)$ of $\DoubleSat$, let $G=(V,E)$ be the corresponding
instance of $\Cover{3}{2}$. 
Then, given two $3$-clubs that cover $G$, we can compute in polynomial time a truth assignment that double-satisfies $\mathcal{C}$.
\end{lemma}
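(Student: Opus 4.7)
The plan is to read a truth assignment off the membership of the literal vertices $x_{i,1}^T, x_i^F$ in the two $3$-clubs, and then use the diameter bound to force each clause to contain both a satisfied positive and a satisfied negative literal. Let $V_1, V_2 \subseteq V$ be such that $G[V_1]$ and $G[V_2]$ are two $3$-clubs covering $G$. By Lemma~\ref{lem:3-club(2)Prel1}, $r$ and $r'$ sit in one of the clubs, say $G[V_1]$, while $r'_T, r'_F, y$ and every $v_{C,j}$ must sit in $V_2\setminus V_1$; a similar distance check places $y_1$ and $y_2$ in $V_2$ as well (both are at distance more than $3$ from $r$). By Lemma~\ref{lem:3-club(2)Prel}, $x_{i,1}^T$ and $x_i^F$ cannot share a $3$-club, so coverage forces each of them into exactly one of $V_1, V_2$, on opposite sides. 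I would therefore define
\[
f(x_i)=\text{true}\;\Longleftrightarrow\; x_{i,1}^T\in V_2\;\Longleftrightarrow\; x_i^F\in V_1,
\]
which is computable in polynomial time from $V_1, V_2$.

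The heart of the argument is to show that $f$ double-satisfies every clause $C_j$. Since $r'_T, v_{C,j}\in V_2$, the $3$-club property of $G[V_2]$ imposes $d_{G[V_2]}(r'_T,v_{C,j})\leq 3$. I would exploit this together with the small neighbourhoods $N_G(r'_T)=\{x_{i,1}^T:1\leq i\leq q\}\cup\{y_1\}$ and $N_G(v_{C,j})=\{x_{i,1}^T:x_i\in C_j\}\cup\{x_i^F:\overline{x_i}\in C_j\}\cup\{y\}$, and perform a direct case analysis on the internal vertices of a length-at-most-$3$ path. Paths of length exactly $3$ can be ruled out, because no neighbour of $y_1$ is adjacent to $v_{C,j}$ and the non-$r'_T$ neighbours of any $x_{i,1}^T$ (namely $r_T$, $x_{i,2}^T$ and the vertices $v_{C,j'}$ with $j'\neq j$) are not adjacent to $v_{C,j}$ either. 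The realised path therefore has length exactly $2$ and passes through some $x_{i,1}^T\in V_2$ with $x_i\in C_j$, so by the definition of $f$ the positive literal $x_i$ satisfies $C_j$. A perfectly symmetric argument using $d_{G[V_2]}(r'_F,v_{C,j})\leq 3$ and $N_G(r'_F)=\{x_i^F:1\leq i\leq q\}\cup\{y_1\}$ yields some $x_h^F\in V_2$ with $\overline{x_h}\in C_j$, hence a satisfied negative literal in $C_j$, so $C_j$ is double-satisfied.

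The main obstacle is precisely this path enumeration: although the claim that only the intended length-$2$ routes realise the required distance is intuitive, the gadget has enough auxiliary structure around $y_1, y_2$ and $y$ that one must list the candidate middle vertices explicitly, using the edge set of $G$, to verify that no unintended length-$3$ shortcut between $r'_T$ (or $r'_F$) and a clause vertex exists. Once this routine but careful check is completed, both the polynomial-time extraction of $f$ and the correctness of the double-satisfaction property follow immediately.
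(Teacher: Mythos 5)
Your proposal is correct and follows essentially the same route as the paper: it uses Lemma~\ref{lem:3-club(2)Prel1} to fix $r,r'$ in one club and $r'_T,r'_F,y,v_{C,j}$ in the other, Lemma~\ref{lem:3-club(2)Prel} to split $x_{i,1}^T$ and $x_i^F$ across the two clubs (yielding the assignment $f$), and then the constraint $d_{G[V_2]}(r'_T,v_{C,j})\leq 3$ (resp.\ $d_{G[V_2]}(r'_F,v_{C,j})\leq 3$) together with an explicit path enumeration to force an adjacent $x_{i,1}^T\in V_2$ (resp.\ $x_h^F\in V_2$) for every clause vertex, which is exactly the paper's case analysis phrased contrapositively. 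The neighbourhood checks you flag as the main obstacle are the same ones the paper carries out, and they go through as you describe.
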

\begin{proof}
Consider two $3$-clubs $G[V_1]$, $G[V_2]$, with $V_1, V_2 \subseteq V$,
that cover $G$.
First, notice that by Lemma~\ref{lem:3-club(2)Prel1} we assume that 
$r,r' \in V_1 \setminus V_2$, while $y,r'_T,r'_F \in V_2 \setminus V_1$ and 
$v_{C,j} \in V_2 \setminus V_1$, for each $j$ with $1 \leq j \leq p$.
Moreover, by Lemma \ref{lem:3-club(2)Prel} it follows that for each $i$ with $1 \leq i \leq q$, $x_{i,1}^T$ and $x_{i}^F$ do not belong to the same $3$-club, that is exactly one belongs to $V_1$ and exactly one belongs to $V_2$.

By construction, each path of length at most three from a vertex $v_{C,j}$, with $1 \leq j \leq p$, to $r'_F$ must pass through some $x_h^F$, with $1 \leq h \leq q$. 
Similarly, each path of length at most three from a vertex $v_{C,j}$,  with $1 \leq j \leq p$, to $r'_T$ must pass through some $x_{i,1}^T$.
Assume that $v_{C,j}$,  with $1 \leq j \leq p$, is not adjacent to a vertex 
$x_{i,1}^T \in V_2$, 
with $1 \leq i \leq q$ ($x_{h}^F \in V_2$, with $1 \leq h \leq p$ respectively). 
It follows that $v_{C,j}$ is only adjacent to $y$ and to vertices $x_w^F$,  
with $1 \leq w \leq q$ ($x_{u,1}^T$,  with $1 \leq u \leq q$, respectively) in $G[V_2]$. 
In the first case, notice that $y$ is adjacent only to $v_{C,z}$,  with $1 \leq z \leq p$, 
and $y_2$, none of which is adjacent to $r'_T$ ($r'_F$, respectively), 
thus implying that this path from $v_{C,j}$ to $r'_T$ (to $r'_F$, respectively) has length at least $4$.
In the second case, $x_w^F$ ($x_{u,1}^T$, respectively) is adjacent to $r'_F$, $r_F$, $v_{C,j}$
and $x_{i,2}^T$ ($r'_T$, $r_T$, $v_{C,j}$, $x_{u,2}^T$, respectively), none of which is adjacent to
$r'_T$ ($r'_F$, respectively), implying that also in this case the path from $v_{C,j}$ to 
$r'_T$ (to $r'_F$, respectively) has length at least $4$.
Since $r'_T, r'_F , v_{C,j} \in V_2$, it follows that, for each $v_{C,j}$, the
set $V_2$ contains a vertex $x_{i,1}^T$, with $1 \leq i \leq q$, and a vertex $x_h^F$, with $1 \leq h \leq q$, connected to $v_{C,j}$.
By Lemma~\ref{lem:3-club(2)Prel} exactly one of $x_{i,1}^T$, $x_i^F$ belongs to $V_2$, 
thus we can construct a truth assignment $f$ as follows:
$f(x_i):= \text{ true}$, if $x_{i,1}^T \in V_2$,
$f(x_i):= \text{ false}$, if $x_{i}^F \in V_2$.
%
%
The assignment $f$ double-satisfies each clause of $\mathcal{C}$, since each $v_{C,j}$ is connected to a vertex $x_{i,1}^T$, for some $i$ with $1 \leq i \leq q$, and a vertex $x_{h}^F$, for some $h$ with $1 \leq h \leq q$.

\qed\end{proof}

Based on Lemma~\ref{lem:3-club(2)1} and Lemma~\ref{lem:3-club(2)2}, and on the
NP-completeness of $\DoubleSat$ (see Theorem \ref{teo:5DoubleSatHard}), we can conclude that 
$\Cover{3}{2}$ is NP-complete.

\begin{theorem}
\label{teo:cover3-2}
$\Cover{3}{2}$ is NP-complete.
\end{theorem}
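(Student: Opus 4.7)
The plan is to assemble the theorem from the three ingredients already in place: Theorem~\ref{teo:5DoubleSatHard} (NP-completeness of $\DoubleSat$), Lemma~\ref{lem:3-club(2)1} (forward direction of the reduction) and Lemma~\ref{lem:3-club(2)2} (backward direction). The heavy lifting has been done in the two lemmas, so the proof itself should be short and largely bookkeeping.

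For NP-hardness, I would first observe that the construction $G=(V,E)$ from an instance $(X,\mathcal{C})$ of $\DoubleSat$ takes polynomial time, since $|V|$ and $|E|$ are bounded by a polynomial in $|X|+|\mathcal{C}|$. Lemma~\ref{lem:3-club(2)1} shows that a satisfying (double-satisfying) assignment yields two $3$-clubs covering $G$, while Lemma~\ref{lem:3-club(2)2} shows the converse. Together they give a polynomial-time reduction from $\DoubleSat$ to $\Cover{3}{2}$, so NP-completeness of $\DoubleSat$ (Theorem~\ref{teo:5DoubleSatHard}) transfers to NP-hardness of $\Cover{3}{2}$.

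For membership in NP, the witness is a pair of vertex subsets $V_1,V_2\subseteq V$. Verification is straightforward: check that $V_1\cup V_2 = V$, and then for each $d\in\{1,2\}$ compute the pairwise distances in $G[V_d]$ (by running BFS from every vertex of $V_d$) and verify that the resulting diameter is at most $3$. Both tasks run in polynomial time in $|V|+|E|$.

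I do not anticipate any genuine obstacle here, since all the technical work (the construction and both directions of correctness) is encapsulated in the preceding lemmas; the only thing to be mildly careful about is to invoke Lemma~\ref{lem:3-club(2)1} and Lemma~\ref{lem:3-club(2)2} in the right direction and to state the NP-membership check explicitly rather than leaving it implicit.
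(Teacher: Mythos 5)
Your proposal is correct and follows essentially the same route as the paper: NP-hardness via Lemma~\ref{lem:3-club(2)1}, Lemma~\ref{lem:3-club(2)2} and Theorem~\ref{teo:5DoubleSatHard}, and NP-membership by polynomial-time verification of a candidate cover by two $3$-clubs. Your version merely spells out the polynomial bound on the construction and the BFS-based diameter check, which the paper leaves implicit.
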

\begin{proof}
By Lemma~\ref{lem:3-club(2)1} and Lemma~\ref{lem:3-club(2)2}, and from the NP-hardness of 
$\DoubleSat$  (see Theorem \ref{teo:5DoubleSatHard}), it follows that
$\Cover{3}{2}$ is NP-hard. 
The membership in NP follows easily from the fact that, given two $3$-clubs,
it can be checked in polynomial time whether are $3$-clubs and cover all vertices of $G$.
\qed\end{proof}

\section{Hardness of Approximation}
\label{sec:HardApprox}
\sloppy
In this section we consider the approximation complexity
of $\MinCov{2}$ and $\MinCov{3}$
and we prove that $\MinCov{2}$ is not approximable 
within factor $O(|V|^{1/2 - \varepsilon})$, for each $\varepsilon>0$, 
and that $\MinCov{3}$ is not approximable 
within factor $O(|V|^{1 - \varepsilon})$, for each $\varepsilon>0$.
The proof for $\MinCov{2}$ is obtained with a reduction very similar to that of 
Section~\ref{sec:compl:2-3}, except from the fact that we reduce 
$\MinPart$ to $\MinCov{2}$.



\begin{corollary}
\label{cor:cov2hard}
Unless $P = NP$, $\MinCov{2}$ is not approximable within factor $O(|V|^{1/2 - \varepsilon})$, for each $\varepsilon>0$.
\end{corollary}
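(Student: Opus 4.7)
The plan is to reuse the construction of Section~\ref{sec:compl:2-3}, but starting from an instance of $\MinPart$ (Minimum Clique Partition) rather than from $\PartT$. Given $G^p=(V^p,E^p)$, we build $G=(V,E)$ exactly as before, with vertex set
\[
V = \{w_i : v_i \in V^p\} \cup \{w_{i,j} : \{v_i,v_j\} \in E^p,\ i<j\}
\]
and the same edge set. Note that $|V| \leq |V^p| + \binom{|V^p|}{2} = O(|V^p|^2)$, hence $|V^p| \geq c\,|V|^{1/2}$ for a constant $c>0$.

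The crucial structural property, Lemma~\ref{lem:cover2-3-Prop1}, depends only on the construction of $G$ and not on the number of clubs; so it still applies. The arguments of Lemmas~\ref{lem:cover2-3-V1} and~\ref{lem:cover2-3-V2} now generalize in the obvious way: given any partition of $V^p$ into $k$ cliques, the same definition $V_d = \{w_j : v_j \in V_d^p\} \cup \{w_{i,j} : v_i \in V_d^p\}$ produces a cover of $G$ by $k$ $2$-clubs (the $2$-club verification carried out for $d\in\{1,2,3\}$ in Lemma~\ref{lem:cover2-3-V1} is independent of $k$); conversely, any cover of $G$ by $k$ $2$-clubs yields, via Lemma~\ref{lem:cover2-3-Prop1}, $k$ cliques of $G^p$ whose union is $V^p$, and from these a partition of $V^p$ into at most $k$ cliques can be extracted in polynomial time. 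Consequently
\[
\mathrm{opt}_{\MinCov{2}}(G) \;=\; \mathrm{opt}_{\MinPart}(G^p).
\]

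Now suppose, towards a contradiction, that $\MinCov{2}$ admits a polynomial-time $O(|V|^{1/2-\varepsilon})$-approximation for some $\varepsilon>0$. Applied to $G$, it returns a cover of size at most $O(|V|^{1/2-\varepsilon})\cdot \mathrm{opt}_{\MinCov{2}}(G)$, from which we extract a clique partition of $G^p$ of size at most
\[
O(|V|^{1/2-\varepsilon})\cdot \mathrm{opt}_{\MinPart}(G^p) \;=\; O(|V^p|^{1-2\varepsilon})\cdot \mathrm{opt}_{\MinPart}(G^p),
\]
using $|V|=O(|V^p|^2)$. This contradicts the $O(|V^p|^{1-\varepsilon'})$-inapproximability of $\MinPart$~\cite{DBLP:journals/toc/Zuckerman07} (taking $\varepsilon'=2\varepsilon$), unless $P=NP$.

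The main obstacle is verifying that the reduction of Section~\ref{sec:compl:2-3}, originally tailored to the case of three cliques/$2$-clubs, really does preserve the optimum value for arbitrary $k$; but the argument of Lemma~\ref{lem:cover2-3-V1} treats each $V_d$ separately and never uses $k=3$, and the argument of Lemma~\ref{lem:cover2-3-V2} only invokes Lemma~\ref{lem:cover2-3-Prop1} on pairs of $w$-vertices inside a single $2$-club, so both generalize immediately. The quadratic blow-up $|V|=O(|V^p|^2)$ then exactly accounts for the drop of the exponent from $1$ to $1/2$.
\qed
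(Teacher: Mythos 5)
Your proposal is correct and follows essentially the same route as the paper: a factor-preserving reduction from $\MinPart$ reusing the construction of Section~\ref{sec:compl:2-3}, with the observation that Lemmas~\ref{lem:cover2-3-V1} and~\ref{lem:cover2-3-V2} generalize from $3$ to arbitrary $k$, and that the quadratic size blow-up $|V| \leq |V^p|^2$ converts the $O(|V^p|^{1-\varepsilon'})$-inapproximability of $\MinPart$ into the $O(|V|^{1/2-\varepsilon})$ bound. No gaps.
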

\begin{proof}
We present a preserving-factor reduction from  $\MinPart$ to $\MinCov{2}$.
Let $G^p=(V^p,E^p)$ be a graph input of $\MinPart$, 
we compute in polynomial time a corresponding
instance $G=(V,E)$ of $\MinCov{2}$ as in Section~\ref{sec:compl:2-3}.
In what follows we prove the following results that are useful for the reduction.



\begin{lemma}
\label{appendix-lem:MinCover2HardApprox1}
Let $G^p=(V^p,E^p)$ be a graph input of $\MinPart$ and let $G=(V,E)$ be the corresponding
instance of $\MinCov{2}$. Then, given a solution of $\MinPart$ on $G^p=(V^p,E^p)$
consisting of $k$ cliques, we can compute in polynomial time a solution of $\MinCov{2}$ on $G=(V,E)$ 
consisting of $k$ $2$-clubs.
\end{lemma}
\begin{proof}
Consider a solution of $\MinPart$ on $G^p=(V^p,E^p)$ where $\{V^p_1, V^p_2, $ $\dots$ $,V^p_k\}$ is the set of $k$ cliques that partition $V^P$. 
We define a solution of $\MinCov{2}$ on $G=(V,E)$ consisting of $k$ $2$-clubs as follows.
For each $d, 1 \leq d \leq k$, let 
\[
V_d = \{ w_j \in V: v_j \in V^p_{d} \} \cup \{ w_{i,j}:  v_i \in V^p_{d} \wedge i < j \}
\]
As for the proof of Lemma~\ref{lem:3-club(2)1}, it follows that for each $d$, $G[V_d]$ is a $2$-club. 
Furthermore, $G[V_1], \ldots, G[V_k]$ cover each vertex of $V$, as each $v_i \in V^p$ is covered by one of the cliques $V^p_1, V^p_2 \dots V^p_k$. 
\qed\end{proof}

\begin{lemma}
\label{appendix-lem:MinCover2HardApprox2}
Let $G^p=(V^p,E^p)$ be a graph input of $\MinPart$ and let $G=(V,E)$ be the 
corresponding instance of $\MinCov{2}$. 
Then, given a solution of $\MinCov{2}$ on $G=(V,E)$ consisting of $k$ $2$-clubs, 
we can compute in polynomial time a solution of $\MinPart$ on $G^p=(V^p,E^p)$ 
with $k$ cliques.
\end{lemma}
\begin{proof}
Consider the $2$-clubs $G[V_1], \ldots, G[V_k]$ that cover $G$.
As for the proof of Lemma~\ref{lem:3-club(2)2}, 
the result follows from the fact that 
by Lemma~\ref{lem:cover2-3-Prop1},
given $w_i, w_j \in V_d$, for each $d$ with $1 \leq d \leq k$, 
it holds that $\{ v_i,v_j \} \in E$.
As a consequence, we can define a solution of $\MinPart$ on $G^p=(V^p,E^p)$ 
consisting of $k$ cliques as follows, for each $d, 1 \leq d \leq k$:
\[
V^p_d=\{v_i : w_i \in V_d \}
\]
\qed\end{proof}

The inapproximability of $\MinCov{2}$ follows 
from Lemma~\ref{appendix-lem:MinCover2HardApprox1} and 
Lemma~\ref{appendix-lem:MinCover2HardApprox2},
and from the inapproximability of $\MinPart$, which is known to be inapproximable 
within factor $O(|V^p|^{1-\varepsilon'})$~\cite{DBLP:journals/toc/Zuckerman07} 
(where $G^p=(V^p,E^p)$ is an instance of
Hence $\MinCov{2}$ is not approximable within factor $O(|V^p|^{1-\varepsilon'})$, 
for each $\varepsilon' > 0$, unless $P = NP$, hence 
$\MinCov{2}$ is not approximable within factor $O(|V^p|^{(1-\varepsilon')})$.
By the definition of $G=(V,E)$, it holds 
$|V|=|V^p|+ |E^p|  \leq |V^p|^2$
hence, for each $\varepsilon>0$,
$\MinCov{2}$ is not approximable within factor $O(|V|^{1/2 - \varepsilon})$, 
unless $P = NP$.
\qed\end{proof}

%
%
%

\sloppy
Next, we show that $\MinCov{3}$ is not approximable within 
factor $O(|V|^{1 - \varepsilon})$,  
for each $\varepsilon>0$, unless $P = NP$, by giving a preserving-factor reduction from $\MinPart$. 

Consider an instance $G^p=(V^p,E^p)$ of $\MinPart$,
we construct an instance $G=(V,E)$ of $\MinCov{3}$ by adding a pendant
vertex connected to each vertex of $V^p$.
Formally, 
$V = \{ u_i, w_i: v_i \in V^p  \}$, 
$E = \{ \{u_i,w_i\}:  1 \leq i \leq |V^p| \} 
\cup \{ \{ u_i,u_j \}: \{ v_i,v_j \} \in E^p \}  \} $.


We prove now the main properties of the reduction.

\begin{lemma}
\label{lem:MinCover3HardApprox1}
Let $G^p=(V^p,E^p)$ be an instance of $\MinPart$ and let $G=(V,E)$ be the corresponding
instance of $\MinCov{3}$. Then, given a solution of $\MinPart$ on $G^p=(V^p,E^p)$ consisting
of $k$ cliques, we can compute in polynomial time a solution of 
$\MinCov{3}$ on $G=(V,E)$ consisting of $k$ $3$-clubs.
\end{lemma}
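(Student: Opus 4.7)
The plan is to mirror the construction used in Lemma~\ref{appendix-lem:MinCover2HardApprox1}: map each clique of the partition of $G^p$ to a $3$-club of $G$ by taking both the ``copy'' vertices and their pendants.

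Concretely, given a partition $\{V^p_1, \dots, V^p_k\}$ of $V^p$ into cliques, I would define for each $d$ with $1 \leq d \leq k$
\[
V_d \;=\; \{u_i : v_i \in V^p_d\} \cup \{w_i : v_i \in V^p_d\}.
\]
Since $\{V^p_1,\dots,V^p_k\}$ partitions $V^p$, the sets $V_1,\dots,V_k$ partition $V$, so every vertex of $G$ is covered. Thus the only nontrivial step is to verify that each $G[V_d]$ is a $3$-club.

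To check the diameter condition in $G[V_d]$, I would go through the three possible types of pairs. First, for two vertices $u_i, u_j \in V_d$ with $i \neq j$: since $v_i, v_j \in V^p_d$ and $V^p_d$ is a clique in $G^p$, we have $\{v_i,v_j\} \in E^p$, so by construction $\{u_i,u_j\} \in E$ and $d_{G[V_d]}(u_i,u_j)=1$. Second, for $u_i, w_j \in V_d$: if $i=j$ they are adjacent, and otherwise the path $u_i, u_j, w_j$ lies entirely in $G[V_d]$ (using the clique edge $\{u_i,u_j\}$ and the pendant edge $\{u_j,w_j\}$), giving distance at most $2$. Third, for $w_i, w_j \in V_d$ with $i \neq j$: the path $w_i, u_i, u_j, w_j$ lies in $G[V_d]$, so the distance is at most $3$. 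Hence $G[V_d]$ has diameter at most $3$ and is a valid $3$-club.

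The main obstacle is essentially cosmetic rather than conceptual: the pendants $w_i$ have degree one in $G$, so any path between two pendants is forced to traverse the corresponding ``interior'' vertices, and one must make sure that those interior vertices actually belong to the same $V_d$. This is guaranteed precisely because we put $u_i$ into $V_d$ whenever $w_i$ is in $V_d$, which is why the construction bundles $u_i$ and $w_i$ together. The bound is then tight at $3$ exactly for pairs of pendants, which also explains why the same reduction would fail for $2$-clubs. The whole procedure is clearly polynomial, so we obtain a cover of $G$ by $k$ $3$-clubs, as required.
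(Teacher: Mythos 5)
Your proof is correct and follows essentially the same route as the paper's: the same assignment of each clique's copy vertices together with their pendants to one set, the observation that the copies induce a clique, and the case analysis giving distances $1$, $2$, $3$ for copy--copy, copy--pendant, and pendant--pendant pairs respectively. The paper's version is slightly more terse but identical in substance.
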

\begin{proof}
Consider a solution of $\MinPart$ on $G^p=(V^p,E^p)$, consisting of the
cliques $\{ G^p[V_{c,1}], G^p[V_{c,2}], \dots, G^p[V_{c,k}]\}$. 
Then, for each $i$, with $1 \leq h \leq k$, define the following subset 
$V_h \subseteq V$:
\[
V_h = \{ u_j,w_j \in V: v_j \in V^p_{h} \}
\] 
Since $V^p_{1}, V^p_{2} \dots V^p_{k}$ partition $V^p$,
it follows that $V_{1}, V_{2} \dots V_{k}$ partition
(hence cover) $G$. 
Now, we show that each $G[V_h]$, with $1 \leq h \leq k$, is a $3$-club.
First, notice that since $G[V^p_{h}]$, is a clique, then 
the set $\{ u_j: u_j \in V_h \}$ induces a clique in $G$.
Then, it follows that, for each $u_i,w_j, w_z \in V_h$,
$d_{G[V_h]}(u_i,w_j) \leq 2$ and
$d_{G[V_h]}(w_j,w_z) \leq 3$, thus concluding the proof.
\qed\end{proof}

\begin{lemma}
\label{lem:MinCover3HardApprox2}
Let $G^p=(V^p,E^p)$ be a graph input of $\MinPart$ and let $G=(V,E)$ be the corresponding instance of $\MinCov{3}$. 
Then, given a solution of $\MinCov{3}$ on $G=(V,E)$ consisting
of $k$ $3$-clubs, we can compute in polynomial time a solution of $\MinPart$ 
on $G^p=(V^p,E^p)$ consisting of $k$ cliques.
\end{lemma}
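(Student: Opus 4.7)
The plan is to exploit the pendant vertices $w_i$ as ``witnesses'' that force any cover of $G$ by $3$-clubs to project cleanly onto a clique cover of $G^p$. The key structural observation is that each $w_i$ has degree one in $G$, with $u_i$ as its only neighbor, so any path leaving $w_i$ must start with the edge $\{w_i,u_i\}$; symmetrically for $w_j$.

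First, I would establish the following distance lemma inside any $3$-club of the cover: if $w_i, w_j \in V_h$ with $i \neq j$, then in $G[V_h]$ we must have $d_{G[V_h]}(w_i, w_j) \leq 3$, and the only possible walk of length at most $3$ from $w_i$ to $w_j$ in $G$ is $w_i - u_i - u_j - w_j$, which forces both $u_i$ and $u_j$ to lie in $V_h$ and forces $\{v_i,v_j\} \in E^p$. This is immediate from the construction: there is no edge incident to $w_i$ other than $\{w_i,u_i\}$ and none incident to $w_j$ other than $\{w_j,u_j\}$, so any $w_i$-$w_j$ walk has the stated form, and the middle edge $\{u_i,u_j\}$ exists in $G$ iff $\{v_i,v_j\} \in E^p$.

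Next, given the cover by $3$-clubs $G[V_1], \ldots, G[V_k]$, define for each $h$
\[
X_h = \{ v_i \in V^p : w_i \in V_h \}.
\]
By the distance lemma above, $X_h$ induces a clique in $G^p$ for every $h$. Since every $w_i$ must be covered by some $V_h$, the family $X_1, \ldots, X_k$ covers all of $V^p$. To obtain a genuine partition into cliques (rather than just a cover), I assign each $v_i \in V^p$ to a unique $X_h$ containing it, say the one with smallest index, discarding repeated memberships; any subset of a clique is still a clique, so the resulting collection is a partition of $V^p$ into at most $k$ cliques, which can be padded with empty parts if desired. The entire construction is clearly polynomial-time, so together with Lemma~\ref{lem:MinCover3HardApprox1} this yields an approximation-preserving reduction.

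The only point requiring care is the distance analysis, but this is straightforward precisely because the pendants have no alternative escape routes; no auxiliary gadgets need to be inspected. Once that is in hand, the projection $V_h \mapsto X_h$ and the cleanup into a partition are purely combinatorial and introduce no loss in the number of parts, so the reduction preserves the solution value exactly.
\qed
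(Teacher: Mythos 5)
Your proposal is correct and follows essentially the same route as the paper: both exploit that each pendant $w_i$ has $u_i$ as its sole neighbor, so $d_{G[V_h]}(w_i,w_j)\leq 3$ forces the path $w_i\mhyphen u_i\mhyphen u_j\mhyphen w_j$ and hence $\{v_i,v_j\}\in E^p$, then project $V_h$ to $\{v_i : w_i\in V_h\}$ and resolve overlaps into a partition. Your writeup is slightly more explicit about why the only candidate walk has that form, but the argument is the same.
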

\begin{proof}
Consider the $k$ $3$-clubs $G[V_1],\dots , G[V_k]$ that cover $G$. 
First, we  show that for each $V_h, 1 \leq h, \leq k$, $\forall w_i, w_j \in V_h$, 
with $1 \leq i,j \leq |V^p|$, it holds that $u_i,u_j \in V_h$.
Indeed, notice that $N(w_i)=\{u_i\}$ and $N(w_j)=\{ u_j \}$, and by the definition of a
$3$-club we must have $d_{G[v_h]}(w_i,w_j) \leq 3$, 
it follows that $ u_i,u_j  \in V_h$. 
Hence, we can define a set of cliques of $G^p$.
For each $V_h$, with $1 \leq h \leq k$, define a set $V^p_{h}$:
\[
V^p_{h}= \{ v_i:w_i \in V_h\}
\]
Notice that each $V^p_{h}$, $1 \leq h \leq k$, induces a clique in $G^p$,
as by construction if $v_i,v_j \in V^p_{h}$, then 
$w_i, w_j \in V_h$, and this implies $\{v_i,v_j\} \in E^p$.
Notice that the cliques $V^p_{1}, \dots, V^p_{k}$ may overlap, but starting from $V^p_{1}, \dots, V^p_{k}$, 
we can easily compute in polynomial time 
a clique partition of $G^p$ consisting of at most $k$ cliques.
\qed\end{proof}

Lemma~\ref{lem:MinCover3HardApprox1} and Lemma~\ref{lem:MinCover3HardApprox2}
imply the following result.

\begin{theorem}
\label{teo:FinalMinCover3HardApprox}
$\MinCov{3}$ is not approximable within factor $O(|V|^{1 - \varepsilon})$, 
for each $\varepsilon>0$, unless $P = NP$.
\end{theorem}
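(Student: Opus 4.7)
The plan is to combine Lemma~\ref{lem:MinCover3HardApprox1} and Lemma~\ref{lem:MinCover3HardApprox2} into a genuine optimum-preserving reduction, and then transfer the known inapproximability bound for $\MinPart$ due to Zuckerman~\cite{DBLP:journals/toc/Zuckerman07}. The two lemmas together show that, from any instance $G^p=(V^p,E^p)$ of $\MinPart$, we can compute in polynomial time an instance $G=(V,E)$ of $\MinCov{3}$ such that $G^p$ admits a clique partition into $k$ cliques if and only if $G$ admits a cover by $k$ $3$-clubs. In particular, the two optima coincide: $\mathsf{OPT}_{\MinPart}(G^p) = \mathsf{OPT}_{\MinCov{3}}(G)$, and from any feasible cover of $G$ by $k$ $3$-clubs one can, in polynomial time, produce a clique partition of $G^p$ of size at most $k$.

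Next, I would relate the sizes of the two instances. By construction, $V = \{u_i, w_i : v_i \in V^p\}$, so $|V| = 2|V^p|$ (in particular $|V^p| = |V|/2$). This linear relationship is all that is needed for the parameter transfer.

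Finally, I would argue by contradiction. Suppose for contradiction that, for some fixed $\varepsilon>0$, there is a polynomial-time approximation algorithm $\mathcal{A}$ for $\MinCov{3}$ achieving factor $O(|V|^{1-\varepsilon})$. Given any instance $G^p$ of $\MinPart$, build $G$ as above, run $\mathcal{A}$ on $G$, and convert the resulting cover into a clique partition of $G^p$ via Lemma~\ref{lem:MinCover3HardApprox2}. Since the conversion does not increase the number of parts and the optima coincide, the size of the produced partition is at most
\[
O(|V|^{1-\varepsilon}) \cdot \mathsf{OPT}_{\MinPart}(G^p) = O\bigl((2|V^p|)^{1-\varepsilon}\bigr) \cdot \mathsf{OPT}_{\MinPart}(G^p) = O\bigl(|V^p|^{1-\varepsilon}\bigr) \cdot \mathsf{OPT}_{\MinPart}(G^p).
\]
This would yield a polynomial-time $O(|V^p|^{1-\varepsilon})$-approximation for $\MinPart$, contradicting~\cite{DBLP:journals/toc/Zuckerman07} (with $\varepsilon'=\varepsilon$) unless $P=NP$.

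I do not expect any serious obstacle: the two lemmas already deliver a tight correspondence between solutions, so the entire argument reduces to the bookkeeping of instance sizes ($|V|=2|V^p|$) and the standard observation that a linear blow-up in vertex count does not affect an $|V|^{1-\varepsilon}$-style inapproximability threshold. The only point requiring a line of care is making sure the optimum-preservation is bidirectional, which is exactly the content of Lemma~\ref{lem:MinCover3HardApprox1} (completeness) and Lemma~\ref{lem:MinCover3HardApprox2} (soundness, together with polynomial-time solution conversion).
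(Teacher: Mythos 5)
Your proposal is correct and follows essentially the same route as the paper: both combine Lemma~\ref{lem:MinCover3HardApprox1} and Lemma~\ref{lem:MinCover3HardApprox2} into a factor-preserving reduction from $\MinPart$, invoke Zuckerman's $O(|V^p|^{1-\varepsilon})$ inapproximability bound, and use $|V|=2|V^p|$ to transfer the threshold. Your write-up merely makes the contradiction argument and the optimum-preservation more explicit than the paper does.
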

\begin{proof}
The result follows from Lemma~\ref{lem:MinCover3HardApprox1} and Lemma~\ref{lem:MinCover3HardApprox2},
as these results imply that we have defined a factor-preserving reduction, 
and from the inapproximability of $\MinPart$, which is known to be 
inapproximable within factor $O(|V^p|^{1-\varepsilon})$, 
for each $\varepsilon > 0$, unless $P = NP$~\cite{DBLP:journals/toc/Zuckerman07} 
(where $G^p=(V^p,E^p)$ is an instance of $\MinPart$).
Thus, $\MinCov{3}$ is not approximable within
factor $O(|V^p|^{1-\varepsilon})$, for each $\varepsilon > 0$, unless $P = NP$, and since 
it holds $|V|=2|V^p|$,  
$\MinCov{3}$ is not approximable within factor $O(|V|^{1 - \varepsilon})$, 
unless $P = NP$.
\qed\end{proof}

\section{An Approximation Algorithm for $\MinCov{2}$}
\label{sec:ApproxAlgo}
In this section, we present an approximation algorithm 
for $\MinCov{2}$ that 
achieves an approximation factor of 
$2|V|^{1/2}\log^{3/2} |V|$. 
Notice that, due to the result in Section \ref{sec:HardApprox}, the approximation factor
is almost tight. 
We start by describing the approximation algorithm, 
then we present the analysis of the 
approximation factor. 



\begin{algorithm}[H]
\SetAlgoLined
\KwData{a graph $G$}
\KwResult{a cover $\mathcal{S}$ of $G$}
$V': = V$;  /* $V'$ is the set of uncovered vertices of $G$, initialized to $V$ */\\
$\mathcal{S} := \emptyset$\;
\While{$V' \neq \emptyset$}{
Let $v$ be a vertex of $V$ such that $|N[v] \cap V'|$ is maximum\;
Add $N[v]$ to $\mathcal{S}$\;
$V' := V' \setminus N[v]$\;
}
\caption{Club-Cover-Approx}
\label{algo:approx}
\end{algorithm}

Club-Cover-Approx is similar to the greedy approximation algorithm for 
$\mathsf{Minimum~Dominating~Set}$ and 
$\mathsf{Minimum~Set~Cover}$.
While there exists an uncovered vertex of $G$, the Club-Cover-Approx algorithm greedily defines a $2$-club 
induced by the set $N[v]$ of vertices, with $v \in V$,
such that $N[v]$ covers the maximum number of uncovered vertices (notice that some of the vertices of $N[v]$ may already be covered).
While for $\mathsf{Minimum~Dominating~Set}$ the choice 
of each iteration is optimal, here the choice is suboptimal. Notice that
indeed computing a maximum 2-club is NP-hard. 

Clearly the algorithm returns a feasible solution for $\MinCov{2}$, as each set $N[v]$ picked by the algorithm is a $2$-club and, by construction, each vertex of 
$V$ is covered. 
Next, we show the approximation factor yielded by the Club-Cover-Approx algorithm for $\MinCov{2}$.

First, consider the set $V_D$ of vertices $v \in V$ picked
by the Club-Cover-Approx algorithm, so that $N[v]$ is added to $\mathcal{S}$.
Notice that $|V_D|=|\mathcal{S}|$ and 
that $V_D$ is a dominating set of $G$,
since, at each step, the vertex $v$ picked by the algorithm dominates each vertex in $N[v]$, and each vertex in $V$ is
covered by the algorithm, so it belongs to some $N[v]$,
with $v \in V_D$.

Let $D$ be a minimum dominating set of the input graph $G$.
By the property of the greedy approximation algorithm for 
$\mathsf{Minimum~Dominating~Set}$, the set $V_D$ has the following property \cite{DBLP:journals/jcss/Johnson74a}:
\begin{equation}
|V_D| \leq |D| \log |V|
\label{eq:approx}
\end{equation}
The size of a minimum dominating set in graphs
of diameter bounded by $2$ (hence $2$-clubs) 
has been considered
in \cite{DBLP:journals/jgt/DesormeauxHHY14}, where
the following result is proven. 

\begin{lemma}[\cite{DBLP:journals/jgt/DesormeauxHHY14}]
\label{lem:DominatingSet}
Let $H=(V_H,E_H)$ be a $2$-club, then $H$ has a dominating set of size  at 
most $1 + \sqrt{|V_H|+\ln(|V_H|)}$.
\end{lemma}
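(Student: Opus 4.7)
The plan is a pick-a-high-degree-vertex-then-cover-the-rest argument that exploits the diameter-$2$ structure of $H$. Write $n = |V_H|$ and let $\Delta$ be the maximum degree in $H$. First I would show that the maximum degree is already fairly large: for any vertex $v$, every $u \in V_H \setminus N[v]$ is at distance $2$ from $v$ and therefore has at least one neighbour in $N(v)$. Counting edges between $N(v)$ and $V_H \setminus N[v]$ gives
\[
n - \Delta - 1 \ \leq\ \Delta(\Delta-1),
\]
so $\Delta^2 + 1 \geq n$, i.e.\ $\Delta = \Omega(\sqrt{n})$. This is the structural fact that makes a small dominating set possible.

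Next I would fix a vertex $v$ of maximum degree $\Delta$ and extend $\{v\}$ into a dominating set by a probabilistic covering argument. Put $B = V_H \setminus N[v]$. Form a random set $S \subseteq N(v)$ by picking each $w \in N(v)$ independently with probability $p$, and define
\[
D \ =\ \{v\}\ \cup\ S\ \cup\ \bigl\{u \in B : N(u) \cap S = \emptyset \bigr\}.
\]
Then $D$ is a dominating set by construction, and, using $|N(u) \cap N(v)| \geq 1$ for each $u \in B$,
\[
\mathbb{E}[|D|] \ \leq\ 1 + p\Delta + \sum_{u \in B}(1-p)^{|N(u)\cap N(v)|} \ \leq\ 1 + p\Delta + (n-\Delta-1)\,e^{-p}.
\]
Existence of a realisation meeting the expectation is immediate; a standard derandomisation by conditional expectations makes the construction deterministic and polynomial-time.

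The last step is to tune $p$ and combine with the lower bound $\Delta = \Omega(\sqrt{n})$ from Step~1 to squeeze the right-hand side down to $1 + \sqrt{n + \ln n}$. This is the main obstacle: the naive balance $p = \ln((n-\Delta-1)/\Delta)$ only yields $1 + \Delta(1 + \ln((n-\Delta-1)/\Delta))$, which in the worst case $\Delta \approx \sqrt{n}$ is of order $\sqrt{n}\,\log n$, i.e.\ off by a logarithmic factor from the target. To close this gap I would exploit the extra information that $\sum_{u\in B} |N(u) \cap N(v)|$ is the number of edges between $N(v)$ and $B$, which is at most $\Delta(\Delta-1)$, so the contribution $(1-p)^{|N(u)\cap N(v)|}$ is on average much smaller than $(1-p)$; a convexity/Jensen argument, or equivalently a two-phase greedy where each round reduces the uncovered set by a multiplicative factor calibrated to the remaining average degree, should upgrade the exponent from $1$ to roughly $\sqrt{n}/\log n$ and yield the stated bound. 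Matching the exact constants in $1 + \sqrt{n + \ln n}$ is then a calculus exercise in minimising the resulting expression over $\Delta \in [\sqrt{n-1},\,n-1]$, following the style of Desormeaux--Haynes--Henning--Yeo.
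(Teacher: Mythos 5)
The paper does not actually prove this lemma --- it imports it verbatim from Desormeaux--Haynes--Henning--Yeo --- so the comparison is between your attempt and their argument. Your attempt has a genuine gap at precisely the step you label ``the main obstacle,'' and the repair you sketch does not work. You want to improve the bound on $\sum_{u\in B}(1-p)^{d_u}$, where $d_u=|N(u)\cap N(v)|$, by using the fact that $\sum_{u\in B}d_u\le\Delta(\Delta-1)$. But an \emph{upper} bound on $\sum_u d_u$ is useless here: subject to $d_u\ge 1$ and $\sum_u d_u\le M$, the quantity $\sum_u(1-p)^{d_u}$ is largest when as many $d_u$ as possible equal $1$, and in the critical regime $\Delta\approx\sqrt{n}$ one has $\Delta(\Delta-1)\approx n\approx|B|$, so the average $d_u$ is about $1$ and the terms are \emph{not} ``on average much smaller than $(1-p)$.'' Jensen also points the wrong way, since $x\mapsto(1-p)^x$ is convex and only lower-bounds the average. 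So your construction is genuinely stuck at $1+\Delta+\Delta\ln\bigl((n-\Delta-1)/\Delta\bigr)=\Theta(\sqrt{n}\log n)$, and no tuning of $p$ rescues it.

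There is a second, more basic issue: the bound as printed, $1+\sqrt{|V_H|+\ln |V_H|}\approx\sqrt{n}$, is not provable by any argument, because a random graph $G(n,p)$ with $p=\sqrt{3\ln n/n}$ has diameter $2$ w.h.p.\ while its domination number is $\Theta(\ln n/p)=\Theta(\sqrt{n\ln n})$. The cited result is $1+\sqrt{n\ln n}$ (a product, not a sum, under the root), and this is in fact the form the paper uses downstream --- note the factor $|V|^{1/2}\ln^{1/2}|V|$ in the proof of Theorem~\ref{teo:approx}. The actual argument is also quite different from yours and much shorter: in a diameter-$2$ graph the open neighbourhood $N(v)$ of \emph{any} vertex $v$ is a dominating set (vertices in $N(v)$ are in the set, $v$ is adjacent to it, and every vertex at distance $2$ from $v$ has a common neighbour with $v$), so the domination number is at most the \emph{minimum} degree $\delta$; if $\delta\le\sqrt{n\ln n}$ you are done, and otherwise the standard probabilistic bound $n\bigl(1+\ln(\delta+1)\bigr)/(\delta+1)$ gives roughly $\sqrt{n\ln n}$. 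Your Step~1 (maximum degree at least $\sqrt{n-1}$) is correct but is not the lever that closes the argument; the key fact your proof never exploits is that a single neighbourhood already dominates.
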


The approximation factor $2|V|^{1/2}\log^{3/2} |V|$
for Club-Cover-Approx is obtained by combining 
Lemma \ref{lem:DominatingSet} and Equation \ref{eq:approx}.


\begin{theorem}
\label{teo:approx}
Let $OPT$ be an optimal solution of $\MinCov{2}$, then Club-Cover-Approx returns a solution having at most $2|V|^{1/2}\log^{3/2} |V| |OPT|$ $2$-clubs.
\end{theorem}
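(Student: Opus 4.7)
The plan is to combine the two ingredients already laid out just before the theorem: the fact that \textbf{Club-Cover-Approx} is essentially the greedy algorithm for \textsf{Minimum Dominating Set} on $G$, and the small-dominating-set bound for $2$-clubs from Lemma~\ref{lem:DominatingSet}. I would bound $|\mathcal{S}|$ first by the size of a minimum dominating set $D^*$ of $G$ (paying a $\log|V|$ factor through Equation~(\ref{eq:approx})), and then bound $|D^*|$ by the cost of an optimal $2$-club cover (paying a $\sqrt{|V|}$ factor through Lemma~\ref{lem:DominatingSet}).

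First I would observe that $|\mathcal{S}|=|V_D|$: each iteration of the while-loop adds exactly one closed neighborhood $N[v]$ to $\mathcal{S}$ and records the vertex $v$ in $V_D$. Moreover $V_D$ is a dominating set of $G$, because the loop only terminates once every vertex of $V$ has been placed into some $N[v]$ with $v\in V_D$, and because at each iteration the algorithm picks the vertex maximizing $|N[v]\cap V'|$ (the uncovered vertices), which is exactly the rule of the classical greedy algorithm for \textsf{Minimum Dominating Set}. Hence Equation~(\ref{eq:approx}) applies and gives $|V_D|\leq |D^*|\log|V|$.

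Next I would bound $|D^*|$ in terms of $|OPT|$. Write $OPT=\{G[V_1],\ldots,G[V_{|OPT|}]\}$. By Lemma~\ref{lem:DominatingSet}, each $2$-club $G[V_i]$ admits a dominating set $D_i$ with $|D_i|\leq 1+\sqrt{|V_i|+\ln|V_i|}$. The key (and easy) point is that since $G[V_i]$ is an \emph{induced} subgraph of $G$, the set $D_i$ dominates $V_i$ not only inside $G[V_i]$ but also inside $G$; as $\bigcup_i V_i=V$, the union $\bigcup_i D_i$ is a dominating set of $G$. Using $|V_i|\leq|V|$ and summing over the $|OPT|$ clubs yields
\[
|D^*|\leq\sum_{i}|D_i|\leq |OPT|\bigl(1+\sqrt{|V|+\ln|V|}\bigr).
\]

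Finally I would conclude by routine simplification: for all sufficiently large $|V|$ one has $1+\sqrt{|V|+\ln|V|}\leq 2|V|^{1/2}$, and trivially $\log|V|\leq\log^{3/2}|V|$, so multiplying the two bounds gives
\[
|\mathcal{S}|=|V_D|\leq |D^*|\log|V|\leq 2|V|^{1/2}\log^{3/2}|V|\cdot|OPT|,
\]
as claimed. There is essentially no obstacle here: the structural facts (greedy DS approximation, small DS in $2$-clubs) are already available as lemmas, and the only nontrivial verification is that local dominating sets glue together into a global one, which follows immediately from $G[V_i]$ being induced.
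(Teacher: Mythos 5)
Your proposal is correct and follows essentially the same route as the paper: identify the algorithm's output with a greedily built dominating set $V_D$, bound $|V_D|$ via Equation~(\ref{eq:approx}), and bound the minimum dominating set by $|OPT|\cdot O(\sqrt{|V|})$ by gluing together the per-club dominating sets from Lemma~\ref{lem:DominatingSet}. The only (immaterial) difference is in the final arithmetic simplification, where you absorb the additive terms into the constant $2$ on $|V|^{1/2}$ rather than folding $\sqrt{\ln|V|}$ into the $\log^{3/2}|V|$ factor as the paper does.
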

\begin{proof}
Let $D$ be a minimum dominating set of $G$ and let $OPT$ be an optimal solution of $\MinCov{2}$.
We start by proving that $|D| \leq 2|OPT| |V|^{1/2} \log^{1/2} |V|$.
For each $2$-club $G[C]$, with $C \subseteq V$, that belongs to $OPT$, by Lemma \ref{lem:DominatingSet} there exists a dominating set $D_C$ of size at most $1 + \sqrt{|C|+\ln(|C|)} \leq 2 \sqrt{|C|+\ln(|C|)}$.
Since $|C| \leq |V|$, it follows that each $2$-club $G[C]$ that belongs to $OPT$ has a dominating set of size at most $2 \sqrt{|V|+\ln(|V|)}$.
Consider $D'=\bigcup_{C \in OPT} D_C$. 
It follows that $D'$ is a dominating set of $G$, since the $2$-clubs in $OPT$ covers $G$.
Since $D'$ contains $|OPT|$ sets $D_C$ and $|D_C| \leq 2\sqrt{|V|+\ln(|V|)}$, for each $G[C] \in OPT$, it follows that $|D'| \leq 2 |OPT| \sqrt{|V|+\ln(|V|)}$.
Since $D$ is a minimum dominating set, it follows that
$|D| \leq |D'| \leq 2|OPT| (\sqrt{|V|+\ln(|V|)})$.
By Equation \ref{eq:approx}, it holds 
$|V_D| \leq 2|D| \log |V|$ 
thus 
$|V_D| \leq 2|V|^{1/2}\ln^{1/2} |V| \log |V| |OPT| 
\leq 2|V|^{1/2}\log^{3/2} |V| |OPT|$.
\qed\end{proof}


\section{Conclusion}
\label{sec:conclusion}
There are some interesting direction for the problem of covering a graph with $s$-clubs.
From the computational complexity point of view, the main open problem 
is whether $\Cover{2}{2}$ is NP-complete or is in P. Moreover, it
would be interesting to study the computational/parameterized complexity 
of the problem in specific graph classes, as done 
for {\sf Minimum Clique Partition} \cite{DBLP:journals/dam/CerioliFFMPR08,DBLP:journals/ita/CerioliFFP11,DBLP:journals/algorithmica/PirwaniS12,DBLP:journals/gc/DumitrescuP11}.
%

\bibliographystyle{splncs03}

\bibliography{biblio}


\end{document}